\newtheorem*{theorem*}{Theorem}
\newtheorem*{definition*}{Definition}
\renewenvironment{enumerate}[1]{\begin{compactenum}#1}{\end{compactenum}}
\begin{document}

\title{Measuring Directed Triadic Closure with Closure Coefficients}

\author{Hao Yin}
\affiliation{%
  \institution{Stanford University}
}
\email{yinh@stanford.edu}

\author{Austin R.\ Benson}
\affiliation{%
  \institution{Cornell University}
}
\email{arb@cs.cornell.edu}

\author{Johan Ugander}
\affiliation{%
  \institution{Stanford University}
}
\email{jugander@stanford.edu}


\renewcommand{\shortauthors}{Hao Yin et al.}

\begin{abstract}

Recent work studying triadic closure in undirected graphs has drawn attention 
to the distinction between measures that focus on the ``center'' 
node of a wedge (\ie, length-2 path) vs.\ measures that focus on the ``initiator,'' a distinction 
with considerable consequences. 
Existing measures in directed graphs, meanwhile, have all been 
center-focused. 
In this work, we propose a family of eight \emph{directed closure coefficients} that  
measure the frequency of triadic closure in directed graphs from the perspective 
of the node initiating closure. 
The eight coefficients correspond to different labelled wedges, where the 
initiator and center nodes are labelled, and
we observe dramatic empirical variation in these coefficients on real-world 
networks, even in cases when the induced directed triangles are isomorphic. 
To understand this phenomenon, we examine the theoretical behavior of our 
closure coefficients under a directed configuration model. Our analysis illustrates 
an underlying connection between the closure coefficients and moments of the joint 
in- and out-degree distributions of the network, offering an explanation of the observed asymmetries. 
We also use our directed closure coefficients as predictors in two machine 
learning tasks. We find interpretable models with AUC scores above 0.92 in 
class-balanced binary prediction, substantially outperforming models that use
traditional center-focused measures.

\hide{
Recent work studying clustering in undirected graphs has drawn attention 
to the distinction between measures of clustering that focus on the ``center'' 
node of a triangle vs.~measures that focus on the ``initiator,'' and this distinction 
has been found to have considerable consequences. 
Meanwhile, the limited existing approaches to clustering in directed graphs 
have all been center-focused. 
In this work, we propose a family of \emph{directed closure coefficients} that 
measure the frequency of triadic closure in directed graphs from the perspective 
of the node initiating closure. We again find considerable consequences stemming 
from this slight change of definition, including new consequences that are unique 
to the directed setting.

We observe dramatic variation in the frequency of triadic closure measured by
different directed closure coefficients in real-world graphs, even in cases when 
the induced directed triangles are isomorphic. 
To better understand this phenomenon, we provide an analysis showing that these 
asymmetries can be arbitrarily extreme. 
We further examine the theoretical behavior of our directed closure coefficients
under the directed configuration model and observe an underlying connection
between the closure coefficients and moments of the joint in- and out-degree
distributions that can explain the observed asymmetries. 
Finally, we use our directed closure coefficients as predictors in two network 
machine learning tasks and find that interpretable models achieve mean AUC 
scores above 92\% in class-balanced binary prediction, substantially outperforming 
models that use other predictors related to triadic closure (77\%).
}

\medskip
{\noindent \bf Keywords:} 
directed networks,
triadic closure,
closure coefficients,
configuration model

\end{abstract}


\maketitle


\section{Introduction}
\label{sec:introduction}


A fundamental property of networks across domains is the increased probability 
of edges existing between nodes that share a common neighbor, a phenomenon known 
as triadic closure~\cite{simmel1908soziologie,rapoport1953spread,watts1998collective}. 
This concept underpins various ideas in the study of networks---especially 
in undirected network models with symmetric relationships---including the
development of generative models~\cite{leskovec2005graphs, jackson2007meeting, seshadhri2012community, robles2016sampling}, 
community detection methods~\cite{fortunato2010community, gleich2012vertex}, and feature
extraction for network-based machine learning tasks~\cite{henderson2012rolx,lafond2014anomaly}.


A standard measure for the frequency of triadic closure on undirected networks is the
\emph{clustering coefficient}~\cite{watts1998collective,barrat2000properties,newman2001random}.
At the node level, the \emph{local clustering coefficient} of a node $u$ is 
defined as the fraction of wedges (\ie, length-2 paths) with center $u$ that are \emph{closed},
meaning that there is an edge connecting the two ends of the wedge, inducing a triangle.
At the network level, the \emph{average clustering coefficient} is the
mean of the local clustering coefficients~\cite{watts1998collective},
and the \emph{global clustering coefficient}, also known as \emph{transitivity}~\cite{barrat2000properties,newman2001random},
is the fraction of wedges in the entire network that are closed.

Recent research has pointed out a fundamental gap between how
triadic closure is measured by the clustering coefficient and how it is usually explained~\cite{yin2019local}.
Local triangle formation is usually explained by some transitive property of 
the relationships that edges represent; for social networks, this is embodied
in the idea that ``a friend of my friend is my friend''.
In these explanations, however, triadic closure is driven not by the center of a length-2 path
but rather by an end node (which we refer to as the \emph{head}),
who initiates a new connection. In contrast, the local clustering coefficient
that measures triadic closure from the center of a wedge implicitly accredits the closure
to the center node. The recently proposed \emph{local closure coefficient} closes this
definitional gap for undirected graphs 
by measuring closure with respect to the fraction of length-2 paths
starting from a specified head node that are closed~\cite{yin2019local}.

These closure coefficients were only defined on undirected networks, but
the interactions in many real-world networks are more accurately modeled
with an associated orientation or direction.
Examples of such networks include food webs,
where the direction of edges represents carbon or energy flow from one ecological
compartment to another; hyperlink graphs, where edges represent which web pages
link to which others; and certain online social networks such as Twitter, where ``following''
relationships are often not reciprocated~\cite{cheng2011predicting}.
The direction of edges may reveal underlying hierarchical structure in a 
network~\cite{homans1950human,davis1967structure,ball2013friendship}, 
and we should expect the direction to play a role in local triadic closure.

Extensions of clustering coefficients have been proposed in directed 
networks~\cite{fagiolo2007clustering,seshadhri2016directed}, which are center-based
at the node level.
However, formulating directed triadic measures from the center of a wedge 
is even less natural in the directed case, while 
measuring from the head is a more common description of directed closure relationships.
For example, in citation networks,
paper $A$ may cite $B$, which cites $C$ and leads $A$ to also cite $C$.
In this scenario, the initiator of this triadic closure is really paper $A$.
Similarly, in directed social networks, outgoing edges may represent 
differential status~\cite{leskovec2010signed,ball2013friendship},
where if person $A$ thinks highly of $B$ and $B$ thinks highly of $C$, 
then $A$ is likely to think highly of person $C$ and 
consequently initiate an outbound link.

In the above examples, measuring triadic closure from $A$ would be the analog of the closure coefficient
for directed networks, which is what we develop in this paper.
More specifically, we propose a family of directed closure coefficients, which are natural generalizations 
of the closure coefficient for undirected networks. Like the undirected version of closure 
coefficients, these measures are based on the head node of a length-2 path, 
in agreement with common mechanistic interpretations 
of directed triadic closure and fundamentally
different from the center-based clustering coefficient.
Specifically, the {\it directed clustering coefficients} proposed by Fagiolo ~\cite{fagiolo2007clustering} 
are not to be confused with the {\it directed closure coefficients} introduced in this work.

\begin{figure}[t]
\centering
\includegraphics[width=\columnwidth]{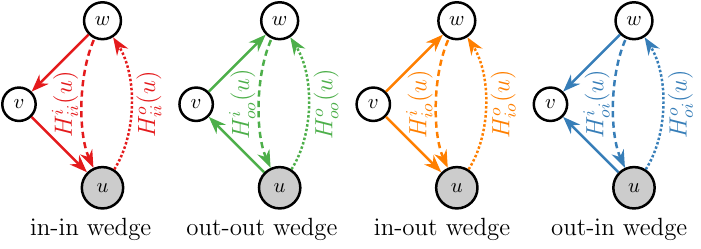} 
\caption{%
Illustration of four wedge types and eight local directed closure coefficients at node $u$.
The type of wedge is denoted by two letters, each representing an edge direction (incoming or outgoing).
The first letter represents the direction of the edge between the head node $u$ and the center node $v$ with respect to $u$.
The second letter represents the direction of the edge between $v$ and the tail node $w$ with respect to $v$.
A wedge is \emph{$i$-closed} if there is an incoming edge to the head node from the tail,
and \emph{$o$-closed} if there is an outgoing edge from the head node to the tail.
There are eight local directed closure coefficients at node $u$, denoted as $\lcc{xy}{z}{u}$
with $x, y, z \in \{i, o\}$. Each local directed closure coefficient measures the frequency of
triadic closure of a certain wedge type (denoted by subscript $xy$) and
closing direction (denoted by superscript $z$).
}
\label{Fig:DefCC}
\end{figure}

Our measurements are based on the notion of a \emph{directed wedge} 
as an ordered pair of directed edges that share a common node,
and the ``non-center'' end nodes of this wedge on the first and second edge
are called the \emph{head} and \emph{tail} nodes, respectively 
(in \Cref{Fig:DefCC}, solid lines mark the wedge, where node $u$ is the head and node $w$ is the tail).
Since each edge may be in either direction, there are four directed wedge types.
When considering triadic closure for each wedge type,
the closing edge between the head and tail nodes may also take either direction. 
Therefore, at each node, there are eight local directed closure coefficients, each representing the 
frequency of directed triadic closure with a certain wedge type and closure direction (\Cref{Fig:DefCC}).
Analogous to the undirected case, we also define the average and global directed closure
coefficients to measure the overall frequency of triadic closure in the entire network.
These statistics provide a natural and intuitive way to study the frequency of directed triadic closure in detail,
including how directions of the incident and second edge
influence a node's tendency to initiate or receive directed triadic closure.

Our empirical evaluation of the directed closure coefficients on real-world networks
in \cref{sec:empirical}
reveals several interesting patterns. At the node level, we find clear evidence of a 2-block correlation
structure amongst the eight local directed closure coefficients, where coefficients within
one block are positively (but not perfectly) correlated while coefficients from 
distinct blocks are nearly uncorrelated.
The block separation coincides with the direction of the closing edge in the closure coefficients.
We also provide theoretical justification for this observation, gleaned from studying the expected
behavior of the closure coefficients for directed configuration model random graphs.
Specifically, we will show that the expected value (under this model) 
of each local directed closure coefficient increases 
with the node degree in the closing edge direction, and thus coefficients with the same closure
direction and directed degree are correlated.

From empirical network measurements, we also find surprising asymmetry amongst average closure
coefficients. Consider the in-out wedges in \Cref{Fig:DefCC}, where the coefficients
$\lcc{io}{i}{u}$ and $\lcc{io}{o}{u}$ correspond to the same directed induced subgraph.
For such symmetric wedges, the likelihood for outbound closure can be
substantially higher than for inbound, even though the two induced subgraphs are structurally identical.
On the other hand, we show
in \cref{subsec:symmetry}
that networks from the same domain exhibit the same asymmetries.

With extremal analysis, we show
in \cref{subsec:symmetry}
that there is in fact no positive lower or upper bound on the ratio between 
types of directed average closure coefficients. 
Additional probabilistic analysis under the configuration model 
shows that the expected values of the directed
closure coefficients depend on various second-order moments of the joint
in- and out- degree distribution of the network. 
This result partly explains the significant difference in values between
a pair of seemingly related average closure coefficients: their expected behaviors
correspond to different second order moments of the degree distribution.

Beyond our intrinsic study on the structure of directed closure coefficients, 
we illustrate
in \cref{sec:application}
how these coefficients can be powerful features for
network-based machine learning. In a lawyer advisory
network where every node (lawyer) 
is labeled with a status level (partner or associate) 
and directed edges correspond to who talks to whom for profession advice~\cite{lazega2001collegial}, we
show that local directed closure coefficients are much better predictors of status compared 
to other structural features such as degree or Fagiolo's directed clustering coefficients.
Analysis of the regularization path of the predictive model yields the insight that it is not 
\emph{how many one advises} but rather \emph{who one advises}
that is predictive of partner status.
We conduct a similar network classification task in an entirely different domain using a food web
from an ecological study. Using the same tools, we find that directed closure coefficients are good
predictors of whether or not a species is a fish. This highlights how our proposed measurements
are potentially useful across many domains.

In summary, we propose the directed closure coefficients, a family of eight new metrics for directed triadic
closure on directed networks. We provide extensive theoretical analysis which help explain some
counter-intuitive empirical observations on real-world networks. Through two case studies, we demonstrate that our 
proposed measurements are good predictors in network-based machine learning tasks.

\section{Background and preliminaries}
\label{sec:preliminaries}

An undirected network (graph) $G = (V, E)$ is a node set $V$ and an edge set $E$,
where an edge $e \in E$ connects two nodes $u$ and $v$.
We use $\degree{u}$ to denote the degree of node $u \in V$, \ie, the number of edges adjacent to $u$.
A \emph{wedge} is an ordered pair of edges that share exactly one node;
the shared node is the \emph{center} of the wedge.
A wedge is \emph{closed} if there is an edge connecting the two non-center nodes (\ie,
the nodes in the wedge induce a triangle in the graph).

Although the notion of triadic closure in general has a long history~\cite{rapoport1953spread,wasserman1994social},
perhaps the most common metric for measuring triadic closure in undirected
networks is the average clustering coefficient~\cite{watts1998collective}.
This metric is the mean of the set of \emph{local clustering coefficients}
of the nodes, where the local clustering coefficient of a node $u$, $\ulccc{}{u}$,
is the fraction of wedges centered at node $u$ that are closed:
\[ 
\ulccc{}{u} = \frac{2  \triad{u} } { \degree{u} \cdot ( \degree{u} - 1)},
\]
where $T(u)$ denotes the number of triangles in which node $u$ participates.
The denominator $\degree{u} \cdot ( \degree{u} - 1)$ is the number of wedges
centered at $u$, and the coefficient 2 corresponds to the two wedges (two
ordered pairs of neighbors) centered at $u$ that the triangle closes. If there
is no wedge centered at $u$ (\ie, $\degree{u} \leq 1$), the local
clustering coefficient is undefined.

Again, to measure the overall triadic closure of the entire network, 
the {\em average clustering coefficient} is defined as the 
mean of the local clustering coefficients of all nodes:
\[
\uaccc{} = \frac{1}{\lvert V \rvert}\sum_{u \in V} \ulccc{}{u}.
\]
%
When undefined, the local clustering coefficient is treated as zero in this average~\cite{newman2003structure},
although there are other ways to handle these cases~\cite{kaiser2008mean}.
An alternative network-level version of the clustering coefficient is the \emph{global clustering coefficient},
which is the fraction  of closed wedges in the entire
network~\cite{barrat2000properties,newman2001random},
\[ 
\ugcc{} = \frac{2 \sum_{u \in V}\triad{u} } { \sum_{u \in V} \degree{u} \cdot ( \degree{u} - 1)}.
\]
This measure is also sometimes called \emph{transitivity}~\cite{boccaletti2006complex}.

Recent research has exposed fundamental differences in how triadic closure is interpreted and measured~\cite{yin2019local}.
For example, social network triadic closure is often explained by the old adage that 
``a friend of a friend is my friend,'' which accredits the creation of the third edge to the end-node (also called the 
{\em head}) of the wedge. This interpretation, however, is fundamentally at odds with how triadic closure is measured
by the clustering coefficient, which is from the perspective of the center node. To close this gap, Yin \etal 
proposed the {\em local closure coefficient} that measures triadic closure from the head node of wedges~\cite{yin2019local}.
Formally, they define this as
\[
\ulcc{}{u} = \frac{2  \triad{u} }{ \sum_{v \in N(u)} [\degree{v} - 1]},
\]
where $N(u)$ is the set of neighbors of $u$.
In this case, the denominator is the number of length-2 paths emanating 
from node $u$.
Thus, in social networks, the closure coefficient of a node $u$ 
can be interpreted as the fraction of friends of friends of $u$ that are themselves friends with $u$.
The closure coefficient has since been investigated under scale-free random graph 
models~\cite{stegehuis2019closure}.

\xhdr{Extensions to directed networks}
The focus of this paper is on measuring triadic closure in directed networks.
The only definitional difference from undirected networks is that
the edges are equipped with an orientation, and $(u, v) \in E$ denotes
a directed edge pointing from $u$ to $v$.%
\footnote{One nuance in directed networks is that an edge might be reciprocal:
$(u,v) \in E$ and $(v, u) \in E$. A pair of reciprocal edges are 
sometimes treated as a single bidirected edge~\cite{garlaschelli2004patterns,seshadhri2016directed}
and sometimes treated as two distinct edges~\cite{sarajlic2016graphlet}.
For readability proposes, in this paper we 
treat a pair of reciprocal edges as two separate edges.
Extensions for special considerations of reciprocal edges
are straightforward and similar theoretical and empirical results can be found.
}
We assume that $G$ does not contain multi-edges or self-loops
and denote the number of nodes by $n = \lvert V \rvert$ and the
number of edges by $m = \lvert E \rvert$.

When an end-node $u$ of an edge is specified, we denote the direction of 
an edge as $i$ (for incoming to $u$) or $o$ (for outgoing from $u$). 
For any node $u \in V$, we use $\indeg{u}$ and $\outdeg{u}$ to denote 
its \emph{in-degree} and \emph{out-degree},
\ie, the number of edges incoming to and outgoing from node $u$, respectively.
For a sequence of joint in- and out-degrees $[(\indeg{u}, \outdeg{u})]_{u \in V}$, 
we use $M_{xy}$, with $x, y \in \{i, o\}$
being the direction indicator, to denote the different second-order moments of the degree
sequence, \ie, 
\[ 
M_{xy} = \frac{1}{n} \sum_{u \in V} \dirdeg{x}{u} \dirdeg{y}{u}.
\]
There are three second-order moments: $M_{ii}$, $M_{oo}$, and $M_{io} = M_{oi}$.

\begin{figure}[t]
\centering
\includegraphics[width=3.2in]{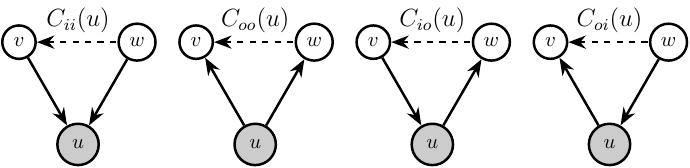} 
\caption{%
Illustration of the local directed clustering coefficients at node $u$, due to 
Fagiolo~\cite{fagiolo2007clustering}. 
The definition is a direct extension of the local clustering coefficient~\cite{watts1998collective},
which measures triadic closure from the center of each wedge.
}
\label{Fig:DefClu}
\end{figure}

Fagiolo proposed a generalization of the clustering coefficient 
to directed networks~\cite{fagiolo2007clustering}. 
Similar to the undirected case, a \emph{directed wedge} is an {\em ordered} pair of edges that share a common node, 
and the common node is called the {\em center} of this wedge.
The wedge is then called \emph{closed} if the there is an edge from the opposite end-point of the second
edge to the opposite end-point of the first edge (this constraint, along with the ordering of the
two edges, covers the symmetries in the problem).
In total, there are four directed clustering coefficients, each defined
by the fraction of certain types of wedges that are closed (\Cref{Fig:DefClu}).
Seshadhri \etal extended Fagiolo's definition by explicitly accounting for
bidirected (reciprocal) edges, with a focus on network-level 
(as opposed to node-level) metrics~\cite{seshadhri2016directed}. 
While we could also explicitly differentiate bidirected links, here we focus
on bidirected links as counting towards wedges and closure in both directions.


Directed clustering coefficients have found applications in analyzing 
fMRI data~\cite{liao2011small},
financial relationships~\cite{minoiu2013network}, 
and social networks~\cite{ahnert2008clustering}.
However, as discussed above, existing
directed clustering coefficients
measure clustering from the center of a wedge, a limited perspective. 
Our head-based directed closure coefficients, which we define formally
in the next section, thus enhance the toolkit for
these diverse applications using triadic closure patterns in directed networks. 

\xhdr{Additional related work}
The first research on directed triadic closure is due to Davis and Leinhardt~\cite{davis1967structure}
who studied the relative frequency of each 3-node directed subgraph pattern and
compared the frequencies with random graph models.
Milo \etal later examined significantly recurring
patterns of connected directed subgraphs as ``network motifs"~\cite{milo2002network},
with a particular emphasis on the role of so-called ``feed-forward loops'' in biology~\cite{mangan2003coherent}.
Similar to the case of directed clustering coefficients~\cite{fagiolo2007clustering}, 
prior research has studied the ratio of closed wedges at the
global (network) level~\cite{onnela2005intensity,brzozowski2011should,seshadhri2016directed}, which is sometimes called ``motif intensity''~\cite{onnela2005intensity}. 
Most generally, we also note that the language of ``directed graphlets''~\cite{sarajlic2016graphlet}, which
can be used to quantify node-level subgraph participation counts within directed networks,
provides an expansive characterization of 128 automorphism orbits 
from which our closure coefficients can be thought of as specific, derived quantities. 
The key differences in our definitions of directed closure coefficients in the next section are that
(i) we measure closure at the node level; and
(ii) they are head-node-based metrics which are more agreeable
with traditional explanations of directed triadic closure.
We will show later that our measures also have considerably different behavior than previous measures.

Directed triadic closure also appears in dynamic network analysis.
Lou \etal proposed a graphical model to
predict the formation of a certain type of directed triadic closure: closing an $oo$-type 
wedge with outbound link~\cite{lou2013learning}. This model was later generalized
to predict the closure of any type of wedge based on node attributes~\cite{huang2014mining}.
Similarly, the notion of a ``closure ratio" has been used to analyze copying 
phenomena in directed networks~\cite{romero2010directed}.
This is also an end-node-based metric that measures a closure of in-in wedges
with an incoming edge. Our definitions of directed closure coefficients are different in that they
(i) are defined on static networks, (ii) measure diverse types of triadic closure,
and (iii) are closely connected to undirected measures of closure and 
the traditional perspective of triadic closure.
Connecting our static measures of directed closure and temporal counterparts is 
an interesting avenue for future research.

\section{Directed Closure Coefficients}
\label{sec:def}

In this section, we provide our formal definition of directed closure
coefficients, and measure them on some representative real-world networks to
demonstrate how they provide empirical insights. These insights provide
direction and motivation for our theoretical analysis in \cref{sec:theory}.  We
then show in \cref{sec:application} how directed closure coefficients are useful
features in machine learning tasks.

\subsection{Definitions}
With the same motivation as the undirected closure coefficient, we propose to
measure directed triadic closure from the endpoint of a directed wedge.  Recall
that a directed wedge is an ordered pair of edges that share exactly one common
node.  The common node is called the center of the wedge, and here we define
the \emph{head} of this wedge as the other end of the first edge, and
the \emph{tail} as the other end of the second edge. Regardless of the
direction of the edges, we denote a wedge by an ordered node triple $(u, v, w)$, where
$u$ is the head, $v$ is the center, and $w$ is the tail.

Since each edge is directed, there are four types of directed wedges.%
\footnote{Again, for readability purposes, we do not consider reciprocal edges
separately; instead, a reciprocal edges is treated as two separate directed edges.
Our definitions and analyses can easily be extended to
study reciprocal edges, though there would be 9 types of directed wedges and 27 closure coefficients.}
We denote the type of wedge with two
variables, say $x$ and $y$, each taking a value in $\{i, o\}$ to denote incoming
or outgoing.  Specifically, a wedge is of \emph{type} $xy$ (an \emph{$xy$-wedge})
if the first edge is of direction $x$ to the \emph{head},
and the second edge is of direction $y$ to the \emph{center} node.
\Cref{Fig:DefCC} shows the four types of directed wedges.

We say that a wedge is \emph{$i$-closed} if there is an incoming edge from the
tail to the head node, and analogously, it is \emph{$o$-closed} if there is an
outgoing edge from head to the tail node.  For any $u \in V$ and $x,y,z \in \{i,o\}$,
we denote $\wg{xy}{u}$ as the number of wedges of type $xy$
where node $u$ is the head, and $\wgc{xy}{z}{u}$ as the number of $z$-closed wedges of type
$xy$ where node $u$ is the head.

Now we give our formal definition of local directed closure coefficients, which
is also illustrated in \Cref{Fig:DefCC}.

\begin{definition}\label{Def:CC}
The {\bf \emph{local directed closure coefficients}} of node $u$ are eight scalars, 
denoted by $\lcc{xy}{z}{u}$ with $x, y, z \in \{i,o\}$, where
\begin{equation}   \label{Eq:DefLcc}
\lcc{xy}{z}{u} = \frac{\wgc{xy}{z}{u}}{\wg{xy}{u}}.
\end{equation}
If there is no wedge of a certain type with node $u$ being the head, the corresponding 
two closure coefficients are undefined.
\end{definition}

Here we highlight again the fundamental difference between the local directed closure 
coefficients we proposed and the local directed clustering coefficients proposed by Fagiolo~\cite{fagiolo2007clustering}:
the closure coefficients measure triadic closure from the head of wedges, which agrees with natural initiator-driven explanations
on triadic closure, while the clustering coefficients measure from the center of wedges.
We will show that this small definitional difference yields substantial empirical and theoretical disparity.


Analogous to the undirected clustering coefficient, we also define the average
and global directed closure coefficient to measure the overall directed triadic closure
tendency of the network.
\begin{definition}\label{Def:ACC}
The {\bf \emph{average directed closure coefficients}} of a graph are eight scalars, 
denoted by $\acc{xy}{z}$ with $x, y, z \in \{i,o\}$, 
each being the mean of corresponding local directed closure coefficient across the network:
\[ 
\acc{xy}{z} = 
\frac{1}{n} \sum_{u \in  V} \lcc{xy}{z}{u},
\]
We treat local closure coefficients that are undefined as taking the value 0 in this average,
though most nodes in the datasets we analyze have eight well-defined closure coefficients.
\end{definition}


\begin{definition}\label{Def:GCC}
The {\bf \emph{global directed closure coefficients}} of a graph are eight scalars, 
denoted by $\gcc{xy}{z}$ with $x, y, z \in \{i,o\}$, 
each being the fraction of closed directed wedges in the entire network:
\begin{equation}   \label{Eq:DefGcc}
\gcc{xy}{z} = \frac{\wgca{xy}{z}}{\wga{xy}},
\end{equation}
where $\wga{xy} = \sum_{u \in V} \wg{xy}{u}$ and
and $\wgca{xy}{z} = \sum_{u \in V} \wgc{xy}{z}{u}$
are the total number of $xy$-wedges and closed $xy$-wedges.
\end{definition}
The global directed closure coefficients are equivalent to some global 
metrics of directed clustering coefficients~\cite{onnela2005intensity,seshadhri2016directed},
since the difference in measuring from head or center does not surface.

\subsection{Empirical Analysis} \label{sec:empirical}
%

\begin{table*}[t]
  \setlength{\tabcolsep}{3pt}
  \begin{tabular}{l @{\hskip 10pt} c c c c c c c c}
    \toprule
    Network &  $n$ & $m$ & $M_{ii}$ & $M_{io}$ & $M_{oo}$ & $r\%$ & $\Delta_c$ & $\Delta_{ac}$ \\ 
    \midrule
    \lawyer   &  71  &  892  &  227.41  &  166.15  &  208.65 &  0.39  &  880  &  5075\\
    \epinions  &  75.9K  &  509K  &  1179.40  &  526.15  &  721.82  &  0.41  &  740K  &  3.59M \\
    \lj   &  4.85M  &  69.0M  &  2091.52  &  1220.33  &  1504.35  &  0.75  &  244M  &  946M \\
    \midrule
    \college    &  1899  &  20.3K  &  347.80  &  391.99  &  592.42  &  0.64  &  11K  &  40K  \\
    \eu  & 1005  &  25.6K  &  1428.97  &  1509.56  &  1756.77  &  0.72  &  132K  &  433K  \\
    \midrule
    \hepth    &  27.8K  &  353K  &  1746.72  &  269.14  &  416.35  &  0.00  &  572  &  1.49M \\
    \hepph  &  34.5K  &  422K  &  790.63  &  189.62  &  380.70  &  0.00  &  555  &  1.29M   \\
    \midrule
    \everglades  &  69  &  916  &  394.12  &  136.52  &  257.16  &  0.07  &  538  &  4781   \\
    \florida  &  128  &  2106  &  493.08  &  201.92  &  451.62  &  0.03  &  357  &  8688   \\
    \midrule
    \google   &  876K  &  5.11M  &  1572.90  &  69.30  &  77.46  &  0.31  &  3.89M  &  28.2M  \\
    \berkstan   &  685K  &  7.60M  &  62430.80  &  324.71  &  390.55  &  0.25  &  13.8M  &  131M  \\
    \bottomrule
  \end{tabular}
\vspace{5pt}
\caption{Summary statistics of networks: number of nodes $n$; number of edges $m$;
  second-order moments of the degree sequence $M_{ii}$, $M_{io}$, and $M_{oo}$;
  fraction $r$ of edges that are reciprocal (i.e., reciprocity); and
  number of cyclic and acyclic triangles ($\Delta_c$ and $\Delta_{ac}$).}
  \label{tab:stats}
\end{table*}

To obtain intuition and empirical insights before diving into theoretical analysis, 
we evaluate the directed closure coefficients on 11 networks from five different domains:
\begin{enumerate}
\item Three social networks. 
\lawyer~\cite{lazega2001collegial}: a professional advisory network between lawyers in a law firm; 
\epinions~\cite{richardson2003trust}: an online network of who-trusts-whom relationships;
and \lj~\cite{backstrom2006group}: an online social friendship network.
\item Two communication networks. 
\college~\cite{panzarasa2009patterns}: an online messaging network between college students; and
\eu~\cite{yin2017local}: an email network between researchers at a European institute.
\item Two citation networks.
\hepth~and \hepph~\cite{gehrke2003overview}: constructed from arXiv submission in two categories.
\item Two food webs.
\florida~and \everglades~\cite{ulanowicz2005network}: carbon exchange relationships
collected from the Florida Bay and the Everglades Wetland.
\item Two web graphs.
\google~and \berkstan~\cite{leskovec2009community}: hyperlink networks from a Google competition
as well as a crawl of berkeley.edu and stanford.edu domains.
\end{enumerate}
\Cref{tab:stats} lists some basic statistics of the networks.
We emphasize that the reciprocity of these networks vary substantially.
For example, the citation networks and food webs contain mostly unidirectional edges, and
the communication networks many bidirected (reciprocal) edges.

\Cref{fig:GAcc1} shows the global and average directed closure coefficients 
of the \lawyer~dataset.
From the first row, we see that the eight global
closure coefficients can be grouped into four pairs,
\[
\{(\gcc{ii}{i},\gcc{oo}{o}),~(\gcc{ii}{o},\gcc{oo}{i}),
~(\gcc{io}{i},\gcc{io}{o}),~(\gcc{oi}{i},\gcc{oi}{o})\}
\]
with each pair of coefficients taking the same value. This observation is 
expected due to the symmetry in the wedge structure, 
which we study in more detail in \cref{subsec:symmetry}.
In contrast, these groupings do not take the same value in the case
of the average closure coefficients (the second row of \Cref{fig:GAcc1}): 
we observe an {\it a priori} unexpected asymmetry. For example,
$\acc{io}{o} = 0.362 \gg \acc{io}{i} = 0.263$ (in orange, \Cref{fig:GAcc1}).
When an in-out wedge is closed with either an incoming or outgoing edge,
the induced triangle is actually the same: both are feedforward loops \cite{milo2002network}.
It is not obvious why closure with an outgoing edge is so much more likely that with an incoming edge.
We develop some theoretical explanations for this asymmetry in \Cref{subsec:cm_theory}.

\begin{figure}[t]
\centering
\includegraphics[width=\columnwidth]{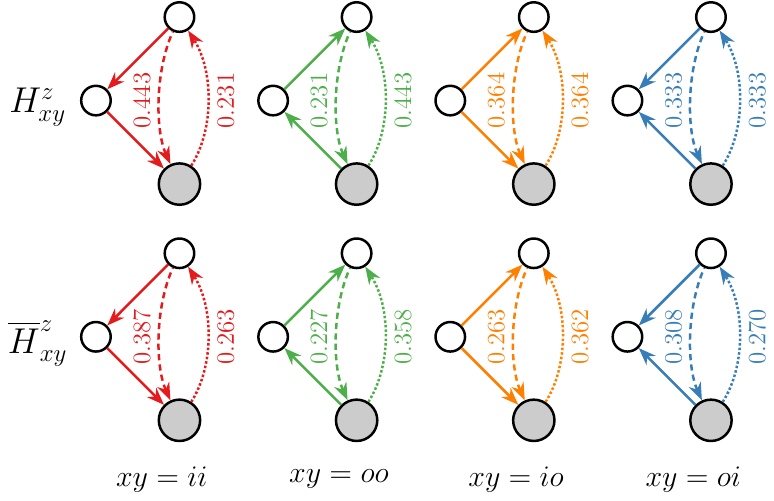}
\caption{%
Global (top) and average (bottom) directed closure coefficients in $\lawyer$,
with head nodes in gray.
The global closure coefficients exhibit symmetry
(\emph{e.g.}, ${\color{black} \gcc{io}{i}} = {\color{black}\gcc{io}{o}}$),
while the average closure coefficients exhibit
counter-intuitive asymmetry between pairs of coefficients, \emph{e.g.},
${\color{black} \acc{io}{i}} = 0.263 \ll {\color{black}\acc{io}{o}} = 0.362$ 
(in orange, second row).
The induced structure is the same in both closure coefficients (a feedforward loop or acyclic triangle).
We explain this phenomenon in \cref{sec:theory}.
}
\label{fig:GAcc1}
\end{figure}

\begin{figure}[t]
\centering
\includegraphics[width=\columnwidth]{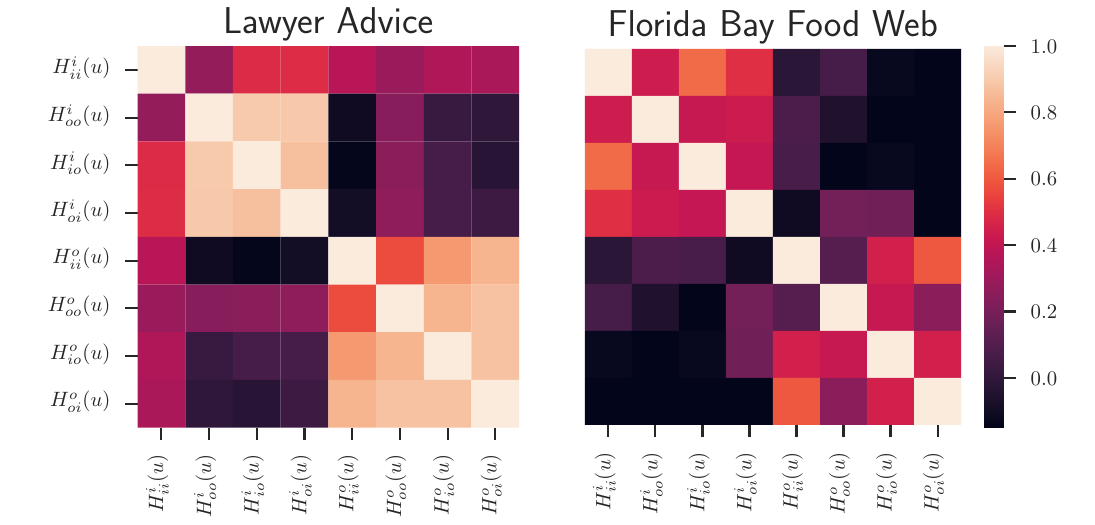} 
\caption{Heatmap of the correlation matrix of the
eight local directed closure coefficients in $\lawyer$ (left) and $\florida$ (right).
There is a clear separation on the eight local closure coefficients: the ones for $i$-closed
and the ones for $o$-closed. Coefficients within each group are highly correlated while
between groups are almost uncorrelated.
}
\label{fig:LccCorr}
\end{figure}

\begin{figure}[h] 
   \centering
\begin{minipage}{0.5\columnwidth}\centering
   {\footnotesize \epinions   }
   \vspace{-10pt}
   \includegraphics[height=0.9in]{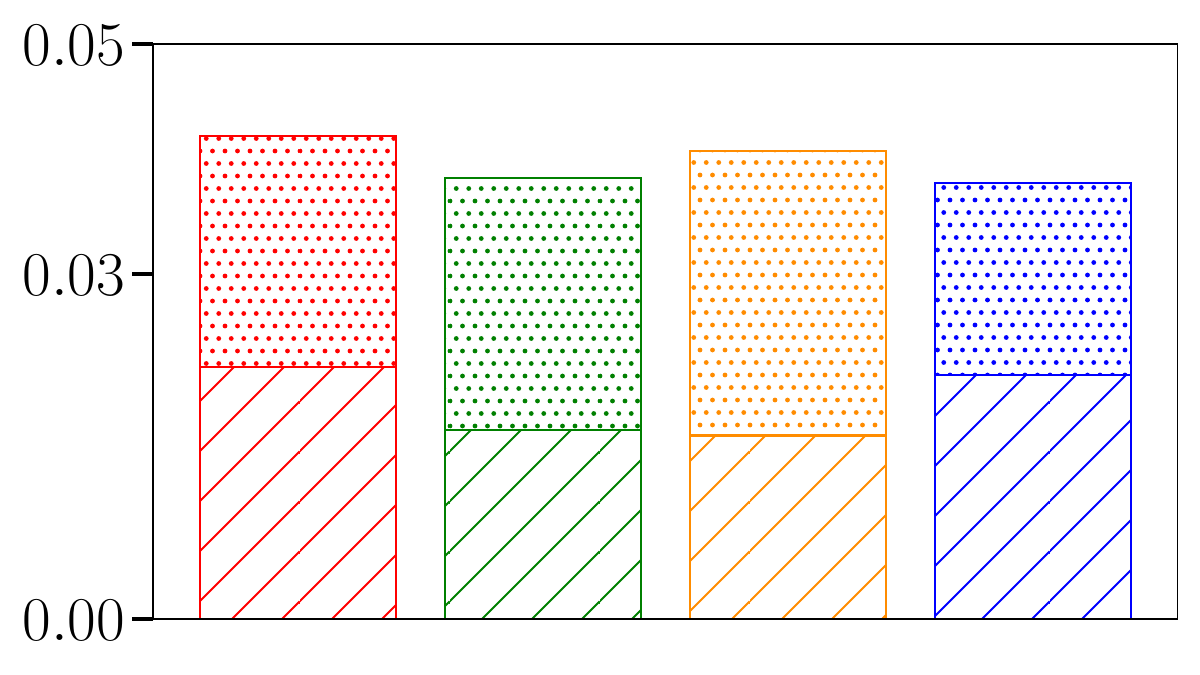} 
\end{minipage}
\begin{minipage}{0.49\columnwidth}\centering
   {\footnotesize \lj   }
   \vspace{-10pt}
   \includegraphics[height=0.9in]{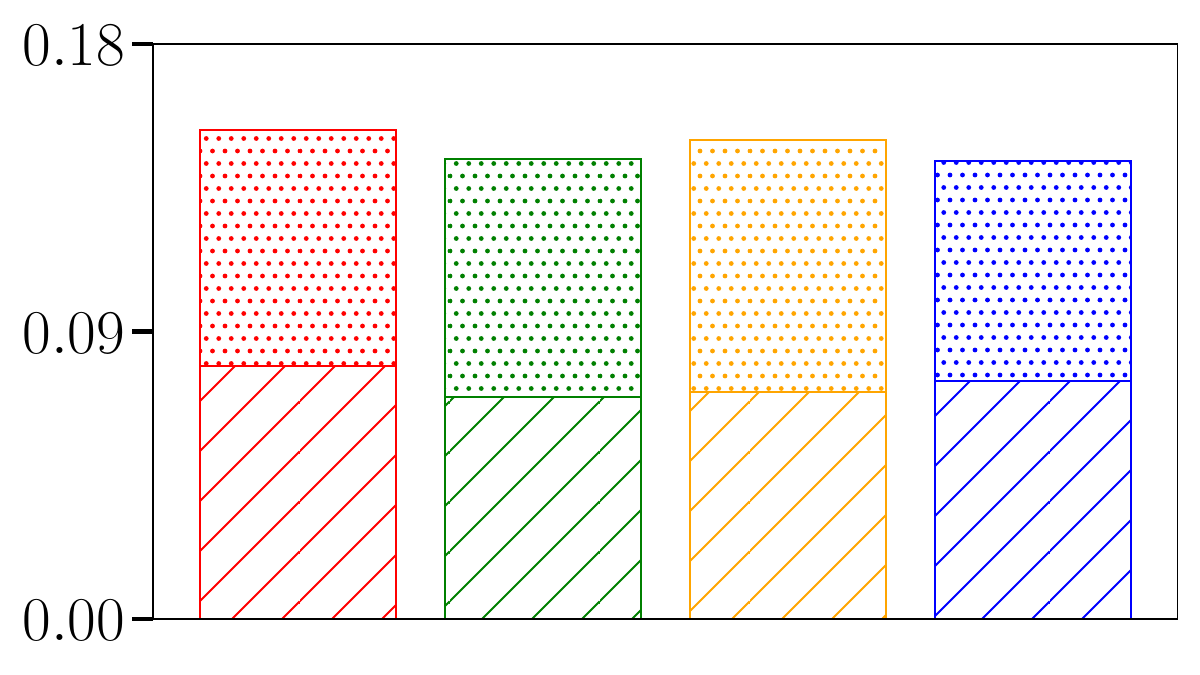} 
\end{minipage}
\\ \vspace{12pt}
\begin{minipage}{0.5\columnwidth}\centering
   {\footnotesize \college   }
   \vspace{-10pt}
   \includegraphics[height=0.9in]{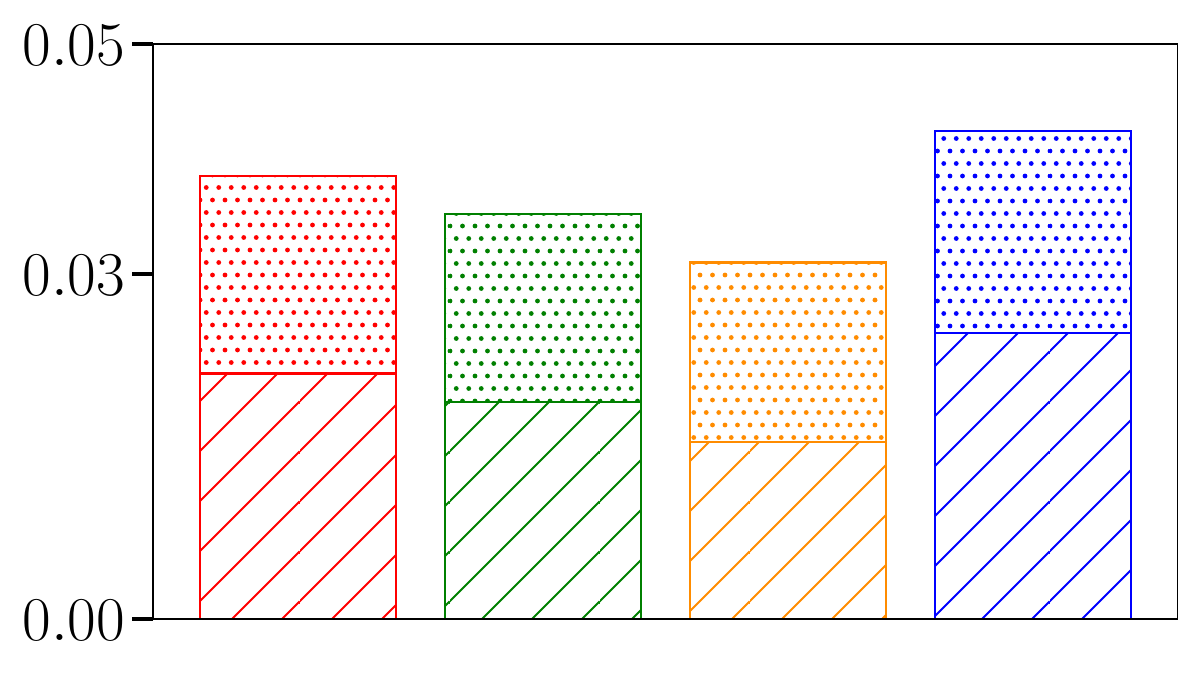} 
\end{minipage}
\begin{minipage}{0.49\columnwidth}\centering
   {\footnotesize \eu   }
   \vspace{-10pt}
   \includegraphics[height=0.9in]{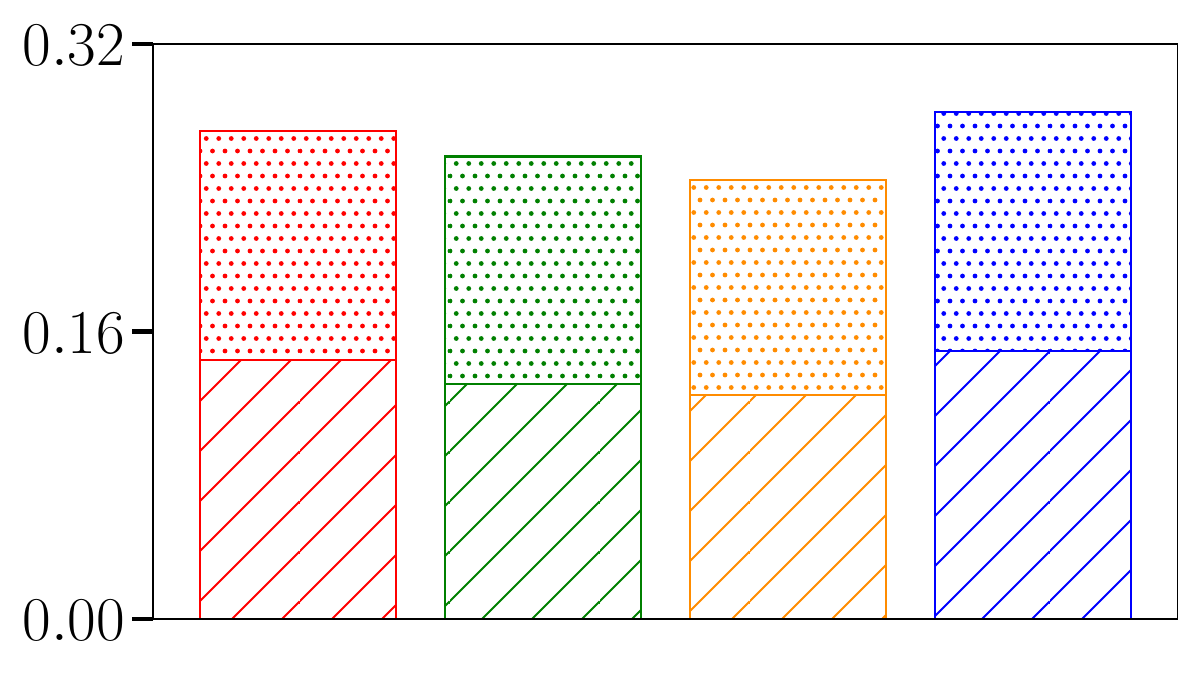} 
\end{minipage}
\\ \vspace{12pt}
\begin{minipage}{0.5\columnwidth}\centering
   {\footnotesize \hepph   }
   \vspace{-10pt}
   \includegraphics[height=0.9in]{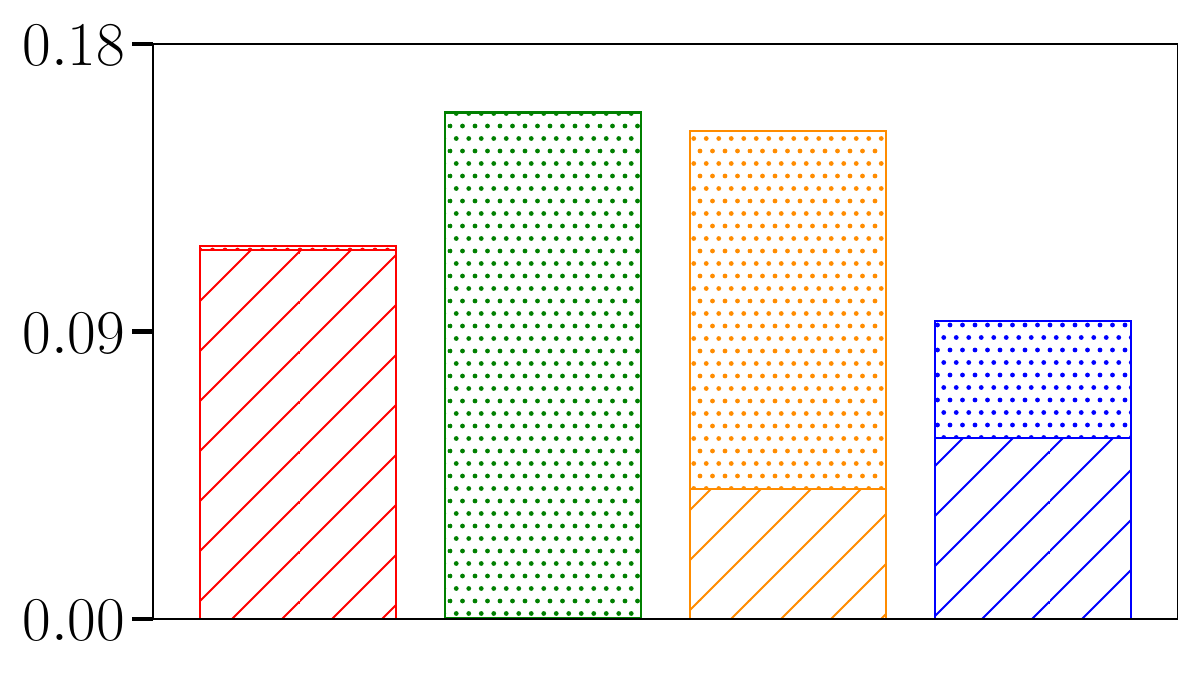} 
\end{minipage}
\begin{minipage}{0.49\columnwidth}\centering
   {\footnotesize \hepth   }
   \vspace{-10pt}
   \includegraphics[height=0.9in]{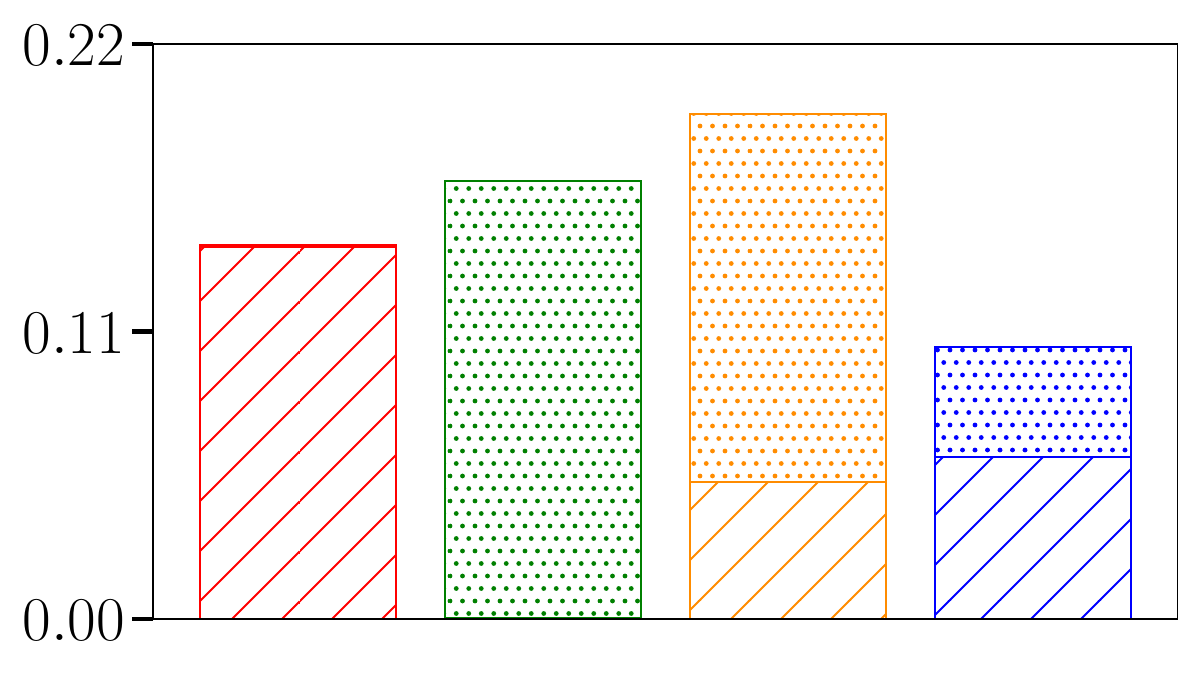} 
\end{minipage}
\\ \vspace{12pt}
\begin{minipage}{0.5\columnwidth}\centering
   {\footnotesize \florida   }
   \vspace{-10pt}
   \includegraphics[height=0.9in]{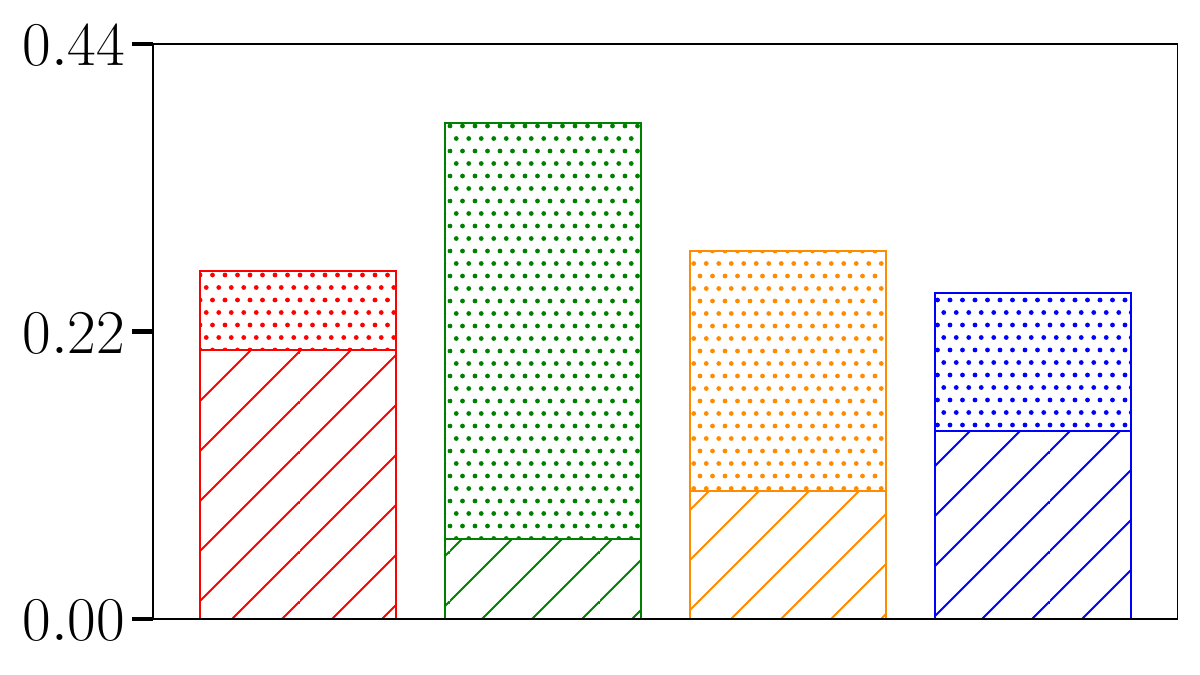} 
\end{minipage}
\begin{minipage}{0.49\columnwidth}\centering
   {\footnotesize \everglades   }
   \vspace{-10pt}
   \includegraphics[height=0.9in]{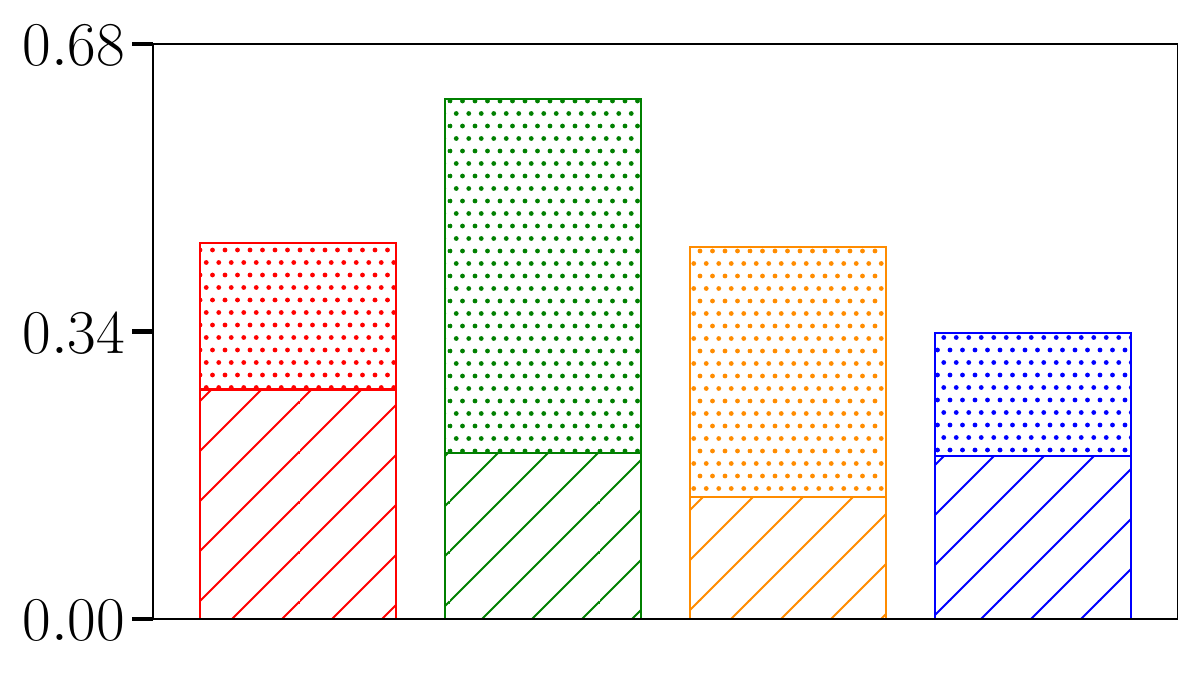} 
\end{minipage}
\\ \vspace{12pt}
\begin{minipage}{0.5\columnwidth}\centering
   {\footnotesize \google  }
   \vspace{-10pt}
   \includegraphics[height=1.2in]{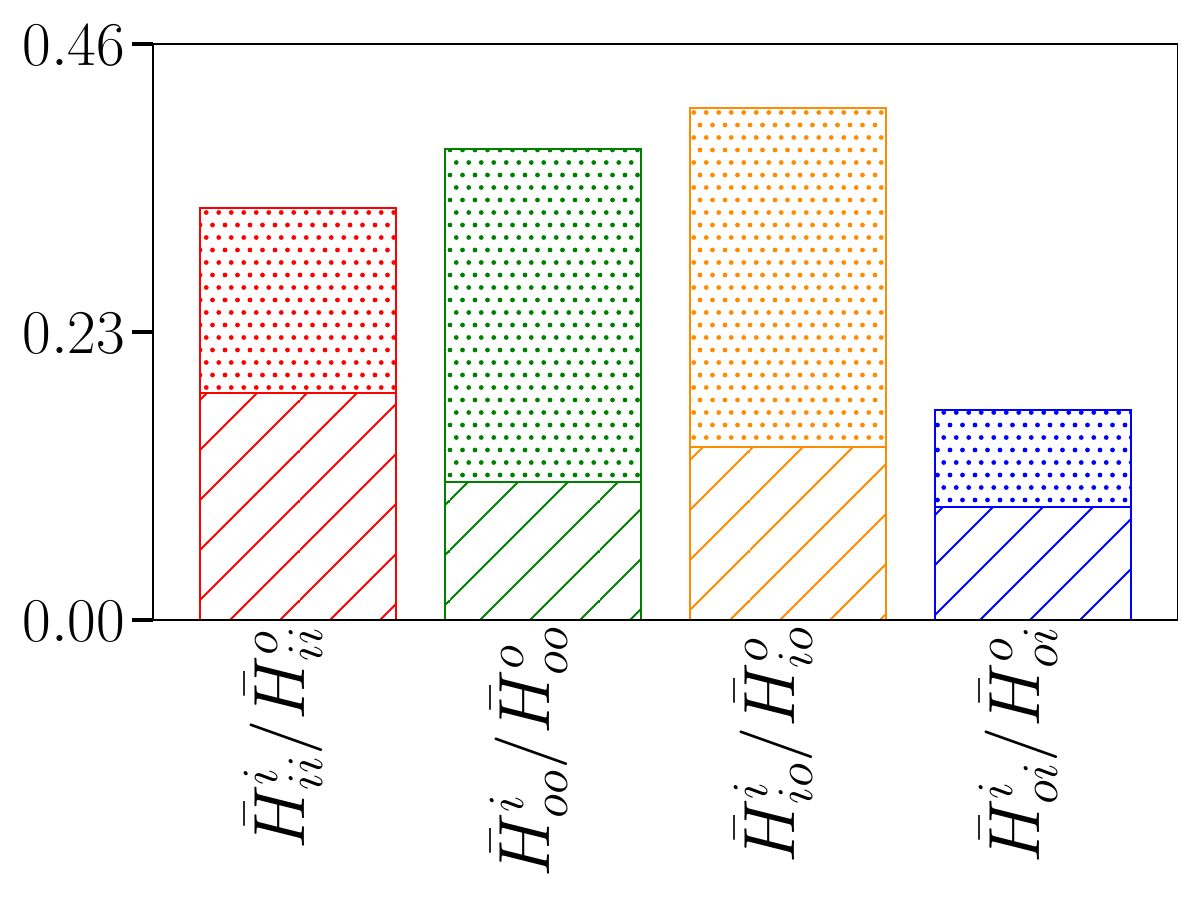} 
\end{minipage}
\begin{minipage}{0.49\columnwidth}\centering
   {\footnotesize \berkstan   }
   \vspace{-10pt}
   \includegraphics[height=1.2in]{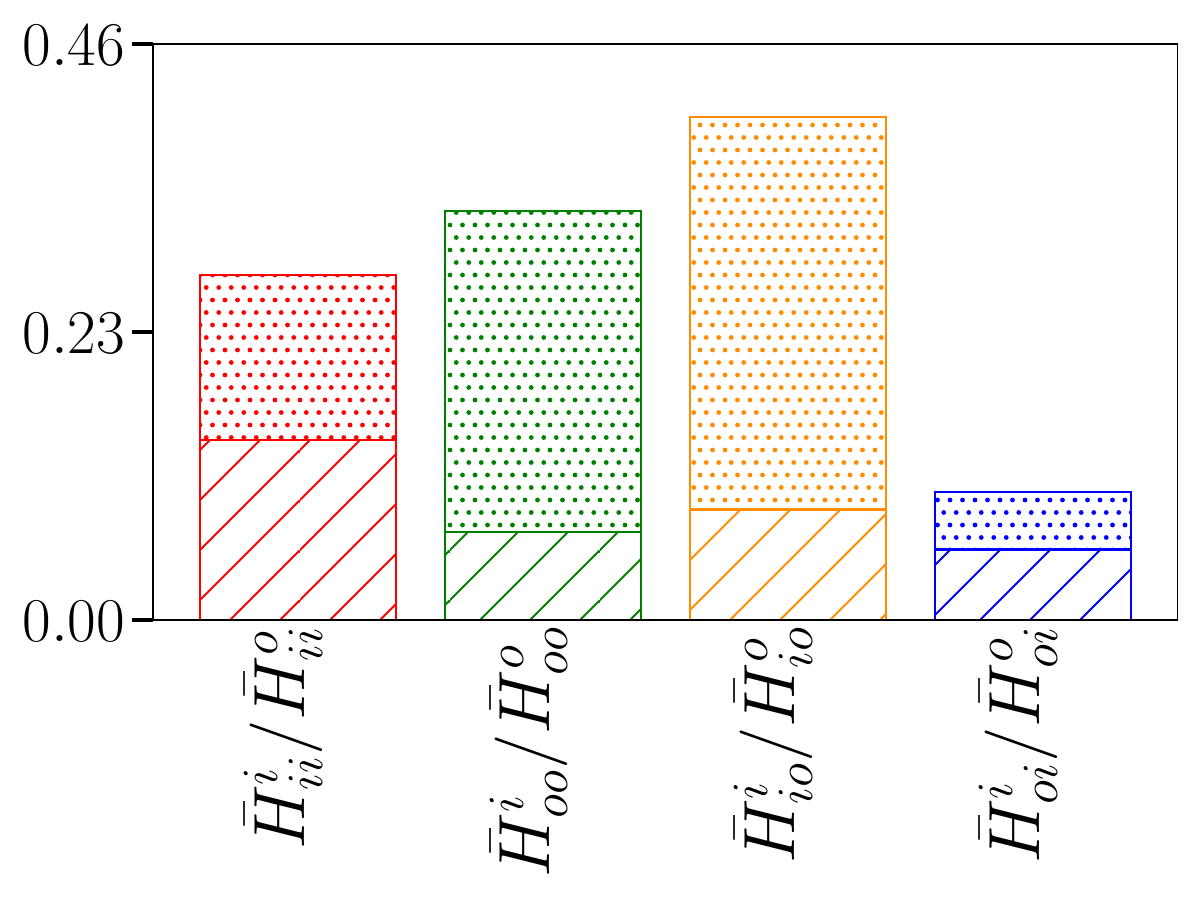} 
\end{minipage}
\\ \vspace{10pt}
\caption{Average directed closure coefficients of networks from five domains. Wedge types 
are colored in the same way as in \Cref{fig:GAcc1}, with incoming closure represented 
by slashed-bars and outgoing closure represented by dotted-bars. Networks from
the same domain (each row) have similar directed closure patterns, while the patterns 
across domains can be quite different.}
\label{Fig:Sesh}
\end{figure}

We also explore the correlations between the eight average directed closure coefficients
in the $\lawyer$ and $\florida$ networks (\Cref{fig:LccCorr}).
Each network has a clear separation amongst the eight local closure coefficients: 
the coefficients in the first four rows and columns (with incoming closure edge),
and the coefficients on the last four rows and columns (with outgoing closure edge).
Within each group, the coefficients are strongly correlated. In
$\lawyer$, the coefficients in different groups are nearly uncorrelated,
whereas in $\florida$, the coefficients in different groups are
negatively correlated. This correlation pattern is representative across the
networks that we have studied with directed closure coefficients, and we explain
this correlation separation as part of the next section.

To study the difference in frequencies of directed triadic closure, we visualize the eight average directed
closure coefficients of 10 networks in \Cref{Fig:Sesh}, where each row contains two networks 
within the same domain. We find that each domain of networks has their own directed triadic
closure patterns. 
In social networks, different wedge types have similar closure frequencies, likely due to the 
abundance of reciprocal edges~\cite{newman2002email}. 
In communication networks, the tall blue bars associated with out-in wedge type means one is more 
likely to connect to people with whom they both send communications; this might be a result of shared interest.
In contrast, citation networks have low closure coefficients for out-in wedge (short blue bars),
meaning that one is not likely to cite or be cited by papers with the same reference: 
this phenomenon might come
from a conflict of interest; moreover, due to near non-existence of cycles,
in-in wedges and out-out wedges are each only closed in one direction.  Similar
patterns appear in the food webs and web graphs, where there is a hierarchical
structure and few cycles. Lastly, we observe similar asymmetry in all
citation, food web, and web graphs, namely that $\acc{io}{o} > \acc{io}{i}$,
from the orange bars showing significantly higher outbound closure rate than
inbound rate.

\section{Theoretical analysis}
\label{sec:theory}

We now provide theoretical analysis of our directed closure coefficients.  We
first prove the symmetry between the four pairs of global directed closure
coefficients. Motivated by the empirical asymmetry amongst average directed
closure coefficients, we first prove that this asymmetry can be unboundedly
large. Finally, to explain the asymmetry, we study how the in- and out- degree distributions
influence the expected value of each average closure coefficients under a
directed configuration model with a fixed joint degree distribution.


\subsection{Symmetry and Asymmetry} 
\label{subsec:symmetry}

Recall that each global directed closure coefficient is the fraction of certain types
of wedges that are closed in the entire network. 
We observed in \Cref{sec:def}
that the eight global directed closure coefficients can be grouped into four pairs,
with each pair of coefficients having the same value. 
The following proposition shows that these values must
be the same in any network.
\begin{proposition} \label{prop:GCC4d}
In any directed network,
$\gcc{ii}{i} = \gcc{oo}{o}$,  $\gcc{ii}{o} = \gcc{oo}{i}$,
$\gcc{io}{i} = \gcc{io}{o}$,  and $\gcc{oi}{i} = \gcc{oi}{o}$.
\end{proposition}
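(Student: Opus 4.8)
The plan is to prove something slightly stronger and fully symmetric: to exhibit a single head--tail reversal bijection on directed wedges that sends the type $xy$ to $\oppo{y}\,\oppo{x}$ and the closure direction $z$ to $\oppo{z}$. All four stated identities are then special cases.

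Concretely, define the map $\rho$ on ordered vertex triples (head, center, tail) by $\rho(u,v,w)=(w,v,u)$; it is an involution. First I would verify that $\rho$ takes $xy$-wedges to $\oppo{y}\,\oppo{x}$-wedges. The mechanism is that swapping the head and the tail changes the vantage point from which each edge's orientation is read: the old second edge (read relative to the center $v$) becomes the new first edge (read relative to the new head $w$), and its direction indicator flips from $y$ to $\oppo{y}$ precisely because ``incoming to $v$'' turns into ``outgoing from $w$''; symmetrically, the old first edge becomes the new second edge and its indicator flips from $x$ to $\oppo{x}$. Checking the four cases $(x,y)\in\{i,o\}^2$ makes this precise, so $\rho$ is a bijection between the set of $xy$-wedges and the set of $\oppo{y}\,\oppo{x}$-wedges, and in particular $\wga{xy}=\wga{\oppo{y}\,\oppo{x}}$. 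Applying the same orientation-flip observation to the closing edge (the edge between head and tail) shows that $z$-closed wedges map to $\oppo{z}$-closed wedges, so $\rho$ restricts to a bijection between $z$-closed $xy$-wedges and $\oppo{z}$-closed $\oppo{y}\,\oppo{x}$-wedges, giving $\wgca{xy}{z}=\wgca{\oppo{y}\,\oppo{x}}{\oppo{z}}$.

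Dividing the two identities yields
\[
\gcc{xy}{z} \;=\; \frac{\wgca{xy}{z}}{\wga{xy}} \;=\; \frac{\wgca{\oppo{y}\,\oppo{x}}{\oppo{z}}}{\wga{\oppo{y}\,\oppo{x}}} \;=\; \gcc{\oppo{y}\,\oppo{x}}{\oppo{z}}.
\]
Specializing $(x,y)=(i,i)$ gives $\gcc{ii}{i}=\gcc{oo}{o}$ and $\gcc{ii}{o}=\gcc{oo}{i}$; $(x,y)=(i,o)$ gives $\gcc{io}{i}=\gcc{io}{o}$ (since here $\oppo{y}\,\oppo{x}=io$); and $(x,y)=(o,i)$ gives $\gcc{oi}{i}=\gcc{oi}{o}$, covering all four pairs.

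The only real work is the bookkeeping in that case analysis — tracking how ``direction with respect to the head'' versus ``with respect to the center'' relabels when head and tail are exchanged — which is elementary but error-prone, so this is the step I would be most careful about. As an independent sanity check one can argue purely at the level of labeled triangles: $\wgca{ii}{i}$ and $\wgca{oo}{o}$ each enumerate every transitive triangle (feedforward loop) exactly once, $\wgca{ii}{o}$ and $\wgca{oo}{i}$ each enumerate every directed $3$-cycle exactly three times, and $\wgca{io}{i},\wgca{io}{o},\wgca{oi}{i},\wgca{oi}{o}$ each enumerate every transitive triangle exactly once; since moreover $\wga{ii}=\wga{oo}$ (reversing a directed $2$-path is a bijection between $ii$- and $oo$-wedges) while the $io$- and $oi$-pairs literally share a denominator, the four equalities follow again.
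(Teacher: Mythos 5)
Your proof is correct and rests on essentially the same idea as the paper's: the head--tail reversal bijection on directed wedges. You simply carry it out uniformly as the single symmetry $\gcc{xy}{z}=\gcc{\oppo{y}\,\oppo{x}}{\oppo{z}}$ (and obtain the denominator equality $\wga{ii}=\wga{oo}$ from the same bijection rather than from the center-node count $\sum_u \indeg{u}\outdeg{u}$), whereas the paper proves the $\gcc{ii}{i}=\gcc{oo}{o}$ case and declares the other three analogous.
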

\begin{proof}
Here we only prove $\gcc{ii}{i} = \gcc{oo}{o}$, and the other three identities 
can be shown analogously. By counting wedges 
from the center node, $\wga{ii} = \sum_{u} \indeg{u} \cdot \outdeg{u} = \wga{oo}$.
Next, there is a one-to-one correspondence between a closed in-in wedge
and a closed out-out wedge by flipping the roles of the head and tail nodes.
Thus, $\wgca{ii}{i} = \wgca{oo}{o}$ and $\gcc{ii}{i} = \gcc{oo}{o}$,
according to \Cref{Def:GCC}.
\end{proof}
Proposition~\ref{prop:GCC4d} illustrates the fundamental symmetry among the eight global
directed closure coefficients. The four pairs of global closure coefficients
$$\{(\gcc{ii}{i},\gcc{oo}{o}),~(\gcc{ii}{o},\gcc{oo}{i}),
~(\gcc{io}{i},\gcc{io}{o}),~(\gcc{oi}{i},\gcc{oi}{o})\}$$
correspond to the same structure and triadic closure pattern in the
entire network, so their values have to be the same.

As an alternative global measure of directed triadic closure, we might expect the average
closure coefficients to have a similar symmetric pattern. Specifically, by pairing 
up the average closure coefficients in the same way,
\begin{equation}   \label{Eq:PairedACC}
\{(\acc{ii}{i},\acc{oo}{o}),~(\acc{ii}{o},\acc{oo}{i}),
~(\acc{io}{i},\acc{io}{o}),~(\acc{oi}{i},\acc{oi}{o})\},
\end{equation}
one might initially guess that the two values in a pair would be close.
However, our empirical evaluation on the $\lawyer$ datasets above,
as well as all the citation, food webs, and web graphs,
showed \emph{asymmetry} in these metrics:
for example, $\acc{io}{o} \gg \acc{io}{i}$ in the $\lawyer$ dataset.
Here we study how large such difference can be and find that there is no non-trivial
upper or lower bound on $\acc{io}{i}$ based on $\acc{io}{o}$ and vice versa.
Furthermore, this same flavor of unboundedness is valid for the other three 
pairs of average directed closure coefficients.

\begin{theorem}   \label{Thm:Extreme}
For any $\epsilon > 0$, and any pair of average directed closure coefficients 
from \Cref{Eq:PairedACC},
denoted as $(\acc{a}{}, \acc{b}{})$, there is a finite graph such that
$\acc{a}{} < \epsilon$ and $\acc{b}{} > 1 - \epsilon$,
and another finite graph such that
$\acc{a}{} > 1 - \epsilon$ and $\acc{b}{} < \epsilon$.
\end{theorem}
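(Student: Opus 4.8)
The plan is to build, for each of the four pairs in \Cref{Eq:PairedACC}, explicit families of graphs (parameterized so the relevant quantities can be pushed to extremes) in which one average closure coefficient is forced near $0$ while the other is forced near $1$, and then to observe that the two directions are symmetric so a single construction, read in two ways, handles both inequalities. Since the four pairs are related by the same head/tail and edge-reversal symmetries used in the proof of \Cref{prop:GCC4d}, I would give the construction in detail only for the pair $(\acc{io}{i}, \acc{io}{o})$ (the one observed empirically to be asymmetric) and remark that relabeling/reversing edges yields the other three. The key design principle: an average closure coefficient is an average of per-node ratios, so it is governed by the ``typical'' node, not the global counts; a small number of high-degree hub nodes can dominate the global closure coefficients while contributing negligibly to the average, which is exactly the slack \Cref{prop:GCC4d}'s symmetry does not constrain.

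Concretely, for $\acc{io}{o}$ large and $\acc{io}{i}$ small I would take a construction where almost every node $u$ with any $io$-wedge has \emph{all} of its $io$-wedges $o$-closed and \emph{none} $i$-closed. An $io$-wedge at head $u$ is a pair $w \to u$, $u \to v$ — wait, re-reading the convention: an $io$-wedge has the first edge incoming to the head $u$ and the second edge outgoing from the center $v$, so the wedge is $u \to v$, $w \to v$ with $u$ the head, $v$ the center, $w$ the tail; it is $o$-closed if $u \to w$ and $i$-closed if $w \to u$. A clean gadget: a ``source'' node $s$ pointing to a large set $A$ of nodes, every node of $A$ pointing to a single ``sink'' $t$, and additionally every node of $A$ pointing to every other node of $A$ (a tournament-like or complete sub-digraph on $A$). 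Then for a generic $a \in A$, its $io$-wedges through center $t$ have tails ranging over $A \setminus \{a\}$ and $s$; the edges $a \to a'$ make these $o$-closed, while there is no edge into $a$ from inside $A$, so they are not $i$-closed. Tuning the in/out structure on $A$ and adding a controlled handful of reversed edges only at one designated node lets $\acc{io}{o} \to 1$ and $\acc{io}{i} \to 0$ as $|A| \to \infty$. Reversing all edge directions (or swapping the roles of head and tail, as in \Cref{prop:GCC4d}) converts this into a graph with $\acc{io}{i}$ large and $\acc{io}{o}$ small, giving the second half of the statement for this pair.

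The remaining pairs: $(\acc{oi}{i}, \acc{oi}{o})$ is handled by the same gadget with the center's edge reversed (an $oi$-wedge is $v \to u$, $v \to w$); and $(\acc{ii}{i}, \acc{oo}{o})$, $(\acc{ii}{o}, \acc{oo}{i})$ couple an $ii$-type to an $oo$-type, so here I would again use a two-layer ``broadcast'' digraph but now arrange that a typical node sees all its $ii$-wedges closed in one direction and, simultaneously in the \emph{same} graph, has essentially no $oo$-wedges (e.g. by making the out-degrees of typical nodes small — a near-directed-acyclic layered construction, in the spirit of the citation/web-graph behavior noted in \Cref{Fig:Sesh}), so $\acc{ii}{\cdot}$ is controlled while $\acc{oo}{\cdot}$ is forced to $0$ because the denominators $\wg{oo}{u}$ vanish for almost all $u$; edge reversal again swaps the two members of the pair.

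The main obstacle I anticipate is \emph{simultaneity and genericity}: I need, in one finite graph, that the ratio $\wgc{xy}{z}{u} / \wg{xy}{u}$ is near its target value for a $1 - o(1)$ fraction of nodes, including the delicate requirement that the denominators are nonzero for the ``good'' nodes (so the coefficient is genuinely defined, not counted as $0$) while the complementary coefficient's denominators vanish or its numerators vanish — and all of this must be robust to the bookkeeping of exactly which ordered pairs of edges count as which wedge type. I would manage this by keeping the gadgets as symmetric as possible (complete sub-digraphs, uniform layers) so the per-node counts are computable in closed form up to $O(1)$ corrections, and by isolating all the ``asymmetry-breaking'' edges (the few reversed edges, the source/sink) to a bounded set of vertices whose contribution to an $n$-term average is $O(1/n)$. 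Once the $(\acc{io}{i},\acc{io}{o})$ case is written out carefully, the other three should follow by the symmetry argument already established in the proof of \Cref{prop:GCC4d}, with only the labels changed.
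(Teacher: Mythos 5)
Your high-level strategy matches the paper's: both give explicit constructions for each pair, exploiting the fact that an average closure coefficient is dominated by the ``typical'' head node (the paper uses a four-class blow-up graph in which the $i$-closing and $o$-closing behavior is concentrated on two different classes, and separates the coefficients by making one class much larger than the other). However, your proposal has genuine gaps at exactly the points where the work lies. First, you misstate the wedge convention: under \Cref{Fig:DefCC} an $io$-wedge with head $u$ and center $v$ consists of $v \to u$ (incoming to the head) and $v \to w$ (outgoing from the center), whereas you write it as $u \to v$, $w \to v$, which is the $oi$-type; this error propagates into your gadget analysis (e.g., $s$ cannot be the tail of any wedge centered at $t$ as you claim, since there is no edge between $s$ and $t$). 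Second, the gadget itself does not separate the two coefficients as described: with a complete sub-digraph on $A$, essentially every $io$-wedge at a node of $A$ is simultaneously $i$-closed and $o$-closed, so both coefficients tend to $1$; with a tournament, both sit near $1/2$. The sentence ``tuning the in/out structure on $A$ \ldots lets $\acc{io}{o} \to 1$ and $\acc{io}{i} \to 0$'' asserts precisely what must be proved and is the entire content of the theorem for this pair.

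Third, the symmetry you invoke to obtain the second graph of each pair is incorrect. Reversing all edges maps an $io$-wedge $(v \to u,\, v \to w)$ to $(u \to v,\, w \to v)$, i.e., to an $oi$-wedge, and exchanges $i$- and $o$-closure; hence reversal yields $\lcc{io}{i}{u}(G) = \lcc{oi}{o}{u}(G^{R})$, relating the $io$ pair to the $oi$ pair rather than exchanging $\acc{io}{i}$ with $\acc{io}{o}$. (This is also why \Cref{prop:GCC4d} pairs $\gcc{ii}{i}$ with $\gcc{oo}{o}$ across wedge types: the head--tail swap that equates the global coefficients changes which node the average is taken over, which is exactly the asymmetry the theorem is about.) To get the reversed inequality for a fixed pair you need a genuinely re-parameterized construction, which is how the paper proceeds: it exchanges the sizes of the two node classes ($n_1 = k^{2}, n_2 = k$ versus $n_1 = k, n_2 = k^{2}$) within the same gadget. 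Your idea for the $(\acc{ii}{i},\acc{oo}{o})$-type pairs --- forcing $\wg{oo}{u} = 0$ for almost all heads so that the undefined-as-zero convention kills one coefficient while a transitively closed structure drives the other to $1$ --- is sound in principle and differs from the paper's uniform treatment, but it too is left as a sketch. As written, the proposal does not constitute a proof.
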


\begin{figure}[tb]
\begin{center}
\begin{tabular}{c @{\hskip 10pt} c @{\hskip 10pt} c @{\hskip 10pt} c @{\hskip 10pt} c}
\toprule
Class &  \#nodes & $\wg{io}{u}$ & $\wgc{io}{i}{u}$ & $\wgc{io}{o}{u}$ 
\\ \midrule 
\multirow{10}{*}{
        \begin{tikzpicture}
        [
        headnode/.style={circle, draw=black!100, fill=red!50, very thick, minimum size=7mm},
        othernode/.style={circle, double, draw=black!100, very thick, minimum size=1mm},
        baseline=(current bounding box.center)
        ]
        \node[othernode]  (n4)  at (0, 0)        {\scriptsize $C_4$}; 
        \node[othernode]  (n3)  at ($(n4) + (60:1.4)$)   {\scriptsize $C_3$}; 
        \node[othernode]  (n2)  at ($(n3) + (160:1.8)$) {\scriptsize $C_2$}; 
        \node[othernode]  (n1)  at ($(n2) + (45:1.3)$)   {\scriptsize $C_1$}; 
        \draw[-{Stealth[scale=1]}, very thick] (n3)  -- (n4);
        \draw[-{Stealth[scale=1]}, very thick] (n3)  -- (n2);
        \draw[-{Stealth[scale=1]}, very thick] (n2)  -- (n1);
        \draw[-{Stealth[scale=1]}, very thick] (n3)  -- (n1);
        \end{tikzpicture}
}&  \\
& $n_1$  & $n_3n_2 + n_3n_4$ & $n_3 n_2$ & 0 
\\ \\ & $n_2$ & $n_3n_1 + n_3n_4$  & 0 & $n_3 n_1$ 
\\ \\ & $n_3$ & 0 & 0 & 0 
\\ \\ \\ & $n_4$ & $n_3n_1 + n_3n_2$& 0 & 0 
\\ \\[-2mm]
\bottomrule
\end{tabular}
\end{center}
\caption{An example graph used in the proof of \Cref{Thm:Extreme}, showing maximal differences 
between directed closure coefficients $\acc{io}{i}$ and $\acc{io}{o}$.
Each double circle $C_j$ represents a class of nodes and an edge $C_j \to C_k$ means that
$u_j \to u_k$ for all $u_j \in C_j$ and $u_k \in C_k$.
}
\label{Fig:EgExtreme}
\end{figure}%

\begin{proof}
Here we give a constructive proof for the pair $(\acc{io}{i}, \acc{io}{o})$;
the same technique works for the other three pairs.
We use the example graph in \Cref{Fig:EgExtreme}. 
Each double-circle in the figure, denoted by $C_j$ with $j \in \{1,2,3,4\}$, represents a set of nodes, 
and we let $n_j$ denote the number of nodes in each class.
A directed edge from class $C_j$ to $C_k$ means that for any node 
$u_j \in C_j$ and any node $u_k \in C_k$, there is an edge $u_j \rightarrow u_k$.
The number of in-out wedges as well as closed wedges are listed in the last
three columns of the table.
We have that $\lcc{io}{i}{u} = \frac{n_2}{n_2 + n_4}$ for any node $u \in C_1$,
$\lcc{io}{i}{u} = 0$ for $u \in C_2$ or $C_4$, and $\lcc{io}{i}{u}$ undefined for $u \in C_3$. 
Now,
\[ \textstyle
\acc{io}{i} = \frac{n_1 n_2}{(n_2 + n_4)(n_1+n_2+n_3+n_4)},\;\; \acc{io}{o} = \frac{n_1 n_2}{(n_1 + n_4)(n_1+n_2+n_3+n_4)}.
\]
The $n_j$'s can take any integer value. We first fix $n_3 = n_4 = 1$. 
If $n_1 = k^2$ and $n_2 = k$ for any integer $k > 3 / \epsilon$, 
 $\acc{io}{i} > 1 - \epsilon$ and $\acc{io}{o} < \epsilon$.
And if $n_1 = k$ and $n_2 = k^2$ for any integer $k > 3 / \epsilon$, 
$\acc{io}{i} < \epsilon$ and $\acc{io}{o} > 1- \epsilon$.
\end{proof}

In contrast, the directed clustering coefficients due to Fagiolo~\cite{fagiolo2007clustering}
are based on the center of wedges, so the two edges are naturally symmetric and
consequently, the metric is always symmetric. Therefore, there are four
directed clustering coefficients and eight directed closure coefficients.

In the next section, we study how we expect the directed closure coefficients 
to behave in a configuration model, which provides additional insight into
why asymmetries in the directed closure coefficients might be unsurprising.


\subsection{Expectations under Configuration}
\label{subsec:cm_theory}

The previous section showed that pairs of average directed closure coefficients   
can have significantly different values; in fact, our extremal analysis
showed that their ratio can be unbounded in theory.
However, we have not yet provided any intuition for asymmetry in
real-world networks. Here, we provide further theoretical analysis to show  
that the structure of the joint in- and out-degree distribution of a 
network provides one explanation of this asymmetry. When considering
random graphs generated under a directed configuration model with 
a fixed joint degree sequence, 
the coefficients are generally asymmetric even in their expectations.

The configuration model~\cite{molloy1995critical,fosdick2018configuring}
 is a standard tool for analyzing the behavior 
of patterns and measures on networks.
The model is typically studied for undirected graphs with a
specified degree sequence, but the idea cleanly generalizes
to directed graphs with a specified joint degree
sequence~\cite{chen2013directed}. 
It is often hard to understand
the determinants of unintuitive observations on networks. 
What aspect of the specific network under examination leads to 
a given observation? As one specific angle on this question, 
does the observation hold for typical graphs with the observed joint 
degree sequence, and if so, what are the determinants of the behavior?
Analyses using the configuration model can thus be used 
to investigate the expected behavior of a measure, in our
case the directed closure coefficients, under
this random graph distribution.

If a degree sequence satisfies the condition that the maximum degree is upper bounded by
$\sqrt{n}$, then under the configuration model with this degree sequence,  
the probability of forming an edge $u \rightarrow v$ is 
\begin{equation}   \label{Eq:EdgeProb_ConfigM}
\prob[(u, v) \in E \mid S] = \frac{\outdeg{u} \cdot \indeg{v}}{m} \cdot (1 + o(1)),
\end{equation}
where the $o(1)$ term is with respect to large graphs (\ie, $n \rightarrow \infty$)~\cite{newman2003structure}.
As further notation for this section, for an event denoted by $A$, 
we use $\indicator{A}$ as the indicator function
for event $A$, \ie, $\indicator{A} = 1$ when event $A$ happens and $0$ otherwise.
Moreover, we use the symbol ``$\sim$" between two quantities $X \sim Y$ if
$X = Y \cdot (1 + o(1))$.
For any direction variable $x \in \{i, o\}$, we use $\oppo{x}$ to denote the 
opposite direction of $x$.

Before presenting the main theoretical results, we first provide a useful lemma.
The error term $o(1)$ here, as well as those in subsequent theorems in this subsection, vanishes as the size of network grows to infinity,
which is the scenario when the probability of an edge between two nodes in the directed configuration model
is proportional to the product of the nodes' degrees (\Cref{Eq:EdgeProb_ConfigM}).

\begin{lemma}\label{Lem:WedgeEndDegree}
Suppose $G$ is a random directed graph sampled from the directed configuration model 
with joint degree sequence $S$ and let $u$ be any node. 
Let $(u, v, w)$ be a random type-$xy$ wedge with head node $u$.
Then for either direction $z \in \{i, o\}$,
\[
\expect[\dirdeg{z}{w} \mid S, u] = (n / m) \cdot M_{\bar y  z } \cdot (1 + o(1))
.
\]
\end{lemma}
\begin{proof}
%
Conditional on the degree sequence, for any node pair $v^*$ and $w^*$,  $(u, v^*, w^*)$ forms 
an $xy$-wedge with probability
\[
\frac{\dirdeg{x}{u}\dirdeg{\bar x}{v^*}}{m} \cdot \frac{\dirdeg{y}{v^*}\dirdeg{\bar y}{w^*}}{m} \cdot (1 + o(1)) \sim C \cdot \dirdeg{\bar y}{w^*}
\]
where $C$ is a constant independent of $w^*$.
Therefore, for any node $w^*$, it is the other end of a random wedge with probability
$
\prob\left[w = w^* \mid S, u\right]$ $\propto \dirdeg{\bar y}{w^*} \cdot (1 + o(1)),
$
and thus 
\[
\prob\left[w = w^* \mid S, u\right] \sim \frac{\dirdeg{\bar y}{w^*}}{\sum_{w \in V}\dirdeg{\bar y}{w}} = \frac{\dirdeg{\bar y}{w^*}}{m}.
\]
Consequently, we have
\begin{eqnarray*}
\expect[\dirdeg{z}{w} \mid S,u] 
  &=& \sum_{w^* \in V} \dirdeg{z}{w^*} \cdot \prob[w = w^*\mid S, u] 
\\&\sim& \sum_{w^* \in V} \frac{\dirdeg{\bar y}{w^*} \dirdeg{z}{w^*}}{m} = \frac{n M_{\bar y z}}{m}.
\end{eqnarray*}
\end{proof}

Now we present the following theoretical results on the expected value of local
directed closure coefficients under the directed configuration model,
which relates the expected closure coefficient of node $u$ with closing direction $i$
and $o$ to the in- and out-degrees $\indeg{u}$ and $\outdeg{u}$ of $u$.

\medskip\medskip

\begin{theorem}   \label{Thm:CM_LCC}
Let $S$ be a joint degree sequence and $G$ a random directed graph 
sampled from the directed configuration model with $S$.
For any node $u$ and any local directed closure coefficient $\lcc{xy}{z}{u}$,
we have
\begin{equation*}
\expect[\lcc{xy}{z}{u} \mid S] =\frac{n (\dirdeg{z}{u} - \indicator{x=z})}{m^2} \cdot \left(
M_{\bar y \bar z} - \indicator{y=z} \cdot \frac{m}{n} \right) \cdot (1 + o(1)),
\end{equation*}
where $M_{\bar y \bar z}$ is the second-order moment of degree sequence $S$.
\end{theorem}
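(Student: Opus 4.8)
The plan is to compute $\expect[\lcc{xy}{z}{u} \mid S]$ by expanding the ratio $\wgc{xy}{z}{u}/\wg{xy}{u}$ and exploiting the sparsity of the configuration model to replace the ratio of random variables by the ratio of their expectations up to $(1+o(1))$ factors. The denominator is deterministic up to lower-order terms: $\wg{xy}{u}$ counts ordered pairs of edges incident to $u$ where the first has direction $x$ and the second direction $y$, so conditioned on $S$ it equals $\dirdeg{x}{u}\big(\dirdeg{y}{u} - \indicator{x=y}\big)$ (the $\indicator{x=y}$ correction removes the degenerate pair that reuses the same edge when the two directions coincide). The numerator $\wgc{xy}{z}{u}$ is a sum, over such ordered pairs of incident edges through center $v$ (head side) and tail $w$, of the indicator that the closing edge in direction $z$ between $u$ and $w$ is present. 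Since a closed wedge of type $xy$ with closing direction $z$ forces an edge between $u$ and $w$ oriented so that it consumes an $\bar z$-stub at $w$ and a $z$-stub at $u$, I would write $\wgc{xy}{z}{u} = \sum_{w} (\text{number of type-}xy\text{ wedges with head }u,\text{ tail }w)\cdot \indicator{\text{closing edge present}}$.

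Next I would take expectations termwise. The key input is the stated property of the configuration model: the probability that a specific pair of stubs is matched is $1/m$ to leading order, and more generally the probability that an edge $a\to b$ forms is $\outdeg{a}\indeg{b}/m\,(1+o(1))$. The number of type-$xy$ wedges with head $u$ and tail $w$ has expectation proportional to $(\dirdeg{x}{u})(\dirdeg{y}{w})\cdot(\text{count of }v)/m \cdot \ldots$ — but it is cleaner to fix the two stubs at $u$ first (there are $\dirdeg{x}{u}$ choices for the head-side stub and, for the center, we route through $v$) and then ask for the closing edge. After the head-side stub at $u$ is used, $u$ has $\dirdeg{z}{u} - \indicator{x=z}$ stubs of direction $z$ remaining to form the closing edge, which is the source of the $(\dirdeg{z}{u} - \indicator{x=z})$ factor. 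Summing the probability $\propto \dirdeg{\bar z}{w}/m$ that a particular $z$-stub at $u$ closes to $w$, weighted by the number of ways the center $v$ connects $u$ to $w$ with the correct orientations (which contributes a factor $\dirdeg{\bar y}{w}$ after summing over $v$, up to normalization by $m$), produces $\sum_w \dirdeg{\bar y}{w}\dirdeg{\bar z}{w} = n M_{\bar y\bar z}$. The $-\indicator{y=z}\cdot m/n$ correction inside the parenthesis arises because when $y=z$ the center-side stub at $w$ and the closing-edge stub at $w$ must be distinct, so one subtracts the diagonal contribution $\sum_w \dirdeg{\bar y}{w} = m$, i.e.\ $m = n\cdot(m/n)$ after the $1/n$ normalization. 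Collecting the normalizations — one factor of $m$ from routing through the center $v$ and one from the closing edge — gives the $n/m^2$ prefactor, and dividing by the deterministic denominator $\dirdeg{x}{u}(\dirdeg{y}{u}-\indicator{x=y})$ leaves exactly $\dfrac{n(\dirdeg{z}{u}-\indicator{x=z})}{m^2}\big(M_{\bar y\bar z} - \indicator{y=z}\tfrac{m}{n}\big)$.

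The main obstacle is rigor in the two approximation steps: (i) justifying $\expect[\wgc{xy}{z}{u}/\wg{xy}{u} \mid S] = \expect[\wgc{xy}{z}{u}\mid S]/\wg{xy}{u}\cdot(1+o(1))$, which requires either that the denominator concentrates (it is in fact essentially deterministic modulo self-loops/multi-edges, which are $o(1)$ in probability for sparse $S$) or a direct argument that the joint fluctuations are lower order; and (ii) bookkeeping the inclusion–exclusion corrections so that the $\indicator{x=z}$ and $\indicator{y=z}$ terms come out with the right signs — these are exactly the places where distinct stubs at the same node must not be double-counted, and they interact with the four cases $xy\in\{ii,io,oi,oo\}$ slightly differently, so I would verify one representative case in detail (say $xy=io$, $z=o$, matching the $\acc{io}{o}$ computation) and argue the rest follow by the same stub-counting. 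The dependence of the leading term on $M_{\bar y\bar z}$ — the moment indexed by the directions \emph{opposite} to the center-side and closing-side directions — is the substantive conclusion, and it is what feeds the explanation of the observed asymmetry in the next discussion.
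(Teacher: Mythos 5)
There is a genuine gap: you have misidentified the denominator. By the paper's definition, $\wg{xy}{u}$ counts wedges of type $xy$ in which $u$ is the \emph{head}, i.e., length-2 paths $u$--$v$--$w$ whose first edge has direction $x$ at $u$ and whose second edge has direction $y$ at the \emph{center} $v$. Only the first edge is incident to $u$; the second is incident to $v$ and $w$. Hence $\wg{xy}{u}$ is a sum over $u$'s $x$-neighbors $v$ of (roughly) $\dirdeg{y}{v}$, which depends on the degrees of $u$'s randomly matched neighbors and is emphatically \emph{not} the deterministic quantity $\dirdeg{x}{u}(\dirdeg{y}{u}-\indicator{x=y})$ --- that expression is the center-based wedge count underlying Fagiolo's clustering coefficient, which is precisely the distinction the whole paper turns on. The error propagates: under the configuration model $\expect[\wg{xy}{u}\,|\,S]\approx \dirdeg{x}{u}\cdot\frac{n}{m}M_{\bar x y}$, and the expected numerator carries the same factor $\dirdeg{x}{u}\cdot\frac{n}{m}M_{\bar x y}$ (one power of $1/m$ and a sum over centers $v$ for each of the two wedge edges), which must cancel in the ratio to leave the stated formula. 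Dividing instead by $\dirdeg{x}{u}(\dirdeg{y}{u}-\indicator{x=y})$ does not yield the theorem, so your final ``collecting the normalizations'' step cannot close as written; you also appear to drop one factor of $1/m$ and the $M_{\bar x y}$ contribution from summing over $v$ in the numerator, which is how the books seem to balance.

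The remaining ingredients are correct and coincide with the paper's: the closing edge consumes a $z$-stub at $u$ and a $\bar z$-stub at $w$, giving the depletion corrections $\dirdeg{z}{u}-\indicator{x=z}$ and $\dirdeg{\bar z}{w}-\indicator{y=z}$, and the tail $w$ of a random $xy$-wedge is size-biased by $\dirdeg{\bar y}{w}$, whence $\expect[\dirdeg{\bar z}{w}\,|\,S]=\frac{n}{m}M_{\bar y\bar z}$ and the $-\indicator{y=z}\cdot\frac{m}{n}$ correction. The clean fix --- and the route the paper actually takes --- is to avoid computing $\wg{xy}{u}$ altogether: interpret $\lcc{xy}{z}{u}$ as the probability that a \emph{uniformly random} $xy$-wedge with head $u$ is $z$-closed, so the only expectation required is over the size-biased tail $w$, and the ratio-of-expectations issue you flag in your point (i) never arises for the denominator. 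I would restructure the argument around that conditional-probability interpretation rather than patching the ratio computation.
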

\begin{proof}
Note that $\lcc{xy}{z}{u}$ can be directly interpreted as the probability that 
a random type-$xy$ wedge $(u,v,w)$ is $z$-closed, where node $u$ is the head of this wedge.
This is the case if there is an edge between $u$ and $w$ of direction $z$ (with respect to
$u$): a $z$-stub from node $u$ is matched to a $\bar z$-stub from node $w$.
Note that the number of $z$-stubs of node $u$ that are not used in wedge $(u, v, w)$ is 
$(\dirdeg{z}{u} - \indicator{x = z})$, where we need to subtract the indicator function
because one $z$-stub is already used in wedge $(u, v, w)$ if $x = z$. 
Similarly, the number of  $\bar z$-stubs of node $w$ 
that are not used in wedge $(u, v, w)$ is $(\dirdeg{\bar z}{w} - \indicator{\bar y = \bar z})$.
According to the setup of the directed configuration model (\Cref{Eq:EdgeProb_ConfigM}), this probability is
\[
{(\dirdeg{z}{u} - \indicator{x = z}) \cdot (\dirdeg{\bar z}{w} - \indicator{\bar y = \bar z})}/{m} \cdot (1 + o(1))
\]
with the given joint degree sequence, and consequently 
\begin{eqnarray*}
\expect[\lcc{xy}{z}{u} \mid S]
&\sim&
\expect[ (\dirdeg{z}{u} - \indicator{x = z}) \cdot (\dirdeg{\bar z}{w} - \indicator{\bar y = \bar z}) / m  \mid S]
\\&=& \displaystyle \frac{\dirdeg{z}{u} - \indicator{x=z}}{m} 
\cdot \left( \expect[\dirdeg{\bar z}{w} \mid S]  - \indicator{\bar y = \bar z}  \right)
\\&\sim& \displaystyle \frac{\dirdeg{z}{u} - \indicator{x=z}}{m} 
\cdot \left( (n / m) \cdot M_{\bar y \bar z }  - \indicator{\bar y = \bar z}  \right)
\\&=& \frac{n (\dirdeg{z}{u} - \indicator{x=z})}{m^2} \cdot \left(
M_{\bar y \bar z} - \indicator{y=z} \cdot \frac{m}{n} \right)
\end{eqnarray*}
where the second step follows from the fact that the only random variable is the 
degree of a random tail node $w$, and the third step is due to Lemma~\ref{Lem:WedgeEndDegree}.
\end{proof}

\Cref{Thm:CM_LCC} shows that the expected value of the local directed closure coefficient
$\lcc{xy}{z}{u}$ increases with $\dirdeg{z}{u}$, the degree in the direction of closure.
One corollary of this result is that under the configuration model the expected values of the 
local closure coefficient with the same closure direction
are all monotonic with the same corresponding degree, and thus they should be correlated
themselves. This result provides one intuition for the block structure of the correlations 
between coefficients found in \Cref{fig:LccCorr}.

We can easily aggregate the results of~\cref{Thm:CM_LCC} to give expected
values of the average directed closure coefficients.

\begin{theorem}   \label{Thm:CM_ACC}
Let $S$ be a joint degree sequence and $G$ be a random directed graph 
generated from the directed configuration model with $S$.
For any average directed closure coefficient $\acc{xy}{z}$,
\begin{equation*}
\expect[\acc{xy}{z} \mid S] = 
\frac{m - n \cdot \indicator{x=z}}{m^2} \cdot \left(
M_{\bar y \bar z} - \indicator{y=z} \cdot \frac{m}{n} \right) \cdot (1 + o(1)).
\end{equation*}
\end{theorem}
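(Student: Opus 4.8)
The plan is to obtain this as an immediate aggregation of \Cref{Thm:CM_LCC} by linearity of expectation. By \Cref{Def:ACC}, $\acc{xy}{z}$ is the mean of the corresponding local coefficients over all $n$ nodes, so
\[
\expect[\acc{xy}{z} \mid S] = \frac{1}{n}\sum_{u \in V} \expect[\lcc{xy}{z}{u} \mid S].
\]
Substituting the per-node expression from \Cref{Thm:CM_LCC}, the factor $\frac{n}{m^2}\bigl(M_{\bar y \bar z} - \indicator{y=z}\cdot\frac{m}{n}\bigr)$ does not depend on $u$ and pulls out of the sum, so the right-hand side becomes $\frac{1}{n}\cdot\frac{n}{m^2}\bigl(M_{\bar y \bar z} - \indicator{y=z}\cdot\frac{m}{n}\bigr)\cdot\bigl(\sum_{u\in V}(\dirdeg{z}{u} - \indicator{x=z})\bigr)\cdot(1+o(1))$.

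The remaining step is to evaluate $\sum_{u\in V}(\dirdeg{z}{u} - \indicator{x=z})$. Each directed edge contributes exactly $1$ to the in-degree of its head and $1$ to the out-degree of its tail, so $\sum_{u\in V}\dirdeg{i}{u} = \sum_{u\in V}\dirdeg{o}{u} = m$; hence $\sum_{u\in V}\dirdeg{z}{u} = m$ for either value of $z$. Also $\sum_{u\in V}\indicator{x=z} = n\,\indicator{x=z}$ since that indicator is constant in $u$. Therefore $\sum_{u\in V}(\dirdeg{z}{u}-\indicator{x=z}) = m - n\,\indicator{x=z}$, and substituting this back (and using $\frac{1}{n}\cdot\frac{n}{m^2}=\frac{1}{m^2}$) gives exactly $\frac{m - n\,\indicator{x=z}}{m^2}\bigl(M_{\bar y \bar z} - \indicator{y=z}\cdot\frac{m}{n}\bigr)(1+o(1))$, as claimed.

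The only point that needs a little care --- and the main (mild) obstacle --- is the handling of the $(1+o(1))$ error and of nodes at which the local coefficient is undefined. A node with no $xy$-wedge has $\dirdeg{x}{u}=0$ or $\dirdeg{y}{u}=0$, so its contribution to the average is $0$ (or negligible), which is consistent with the convention in \Cref{Def:ACC} of scoring an undefined coefficient as $0$; such nodes do not disturb the identity $\sum_u \dirdeg{z}{u} = m$. For the error term, one should note that the $o(1)$ in \Cref{Thm:CM_LCC} arises from the uniform sparse-regime approximation to the stub-matching probabilities (edge probability of order $\outdeg{u}\indeg{v}/m$) and is governed by the degree sequence $S$ rather than by the individual node, so it can legitimately be extracted from the sum as a single $(1+o(1))$ factor. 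Granting that uniformity, the proof is just the two-line computation above.
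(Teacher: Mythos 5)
Your proposal is correct and matches the paper's own proof essentially line for line: both apply linearity of expectation to \Cref{Thm:CM_LCC}, pull out the $u$-independent factor, and use $\sum_{u}\dirdeg{z}{u}=m$ together with $\sum_u \indicator{x=z}=n\,\indicator{x=z}$. Your added remarks on undefined coefficients and the uniformity of the $(1+o(1))$ factor are reasonable elaborations the paper leaves implicit.
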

\begin{proof}
We have
\begin{eqnarray*}
\expect[\acc{xy}{z} \mid S] 
  & = & \frac{1}{n} \sum_{u} \expect[\lcc{xy}{z}{u} \mid S] 
  \\&\sim& \left(M_{\bar y \bar z} - \indicator{y=z} \cdot \frac{m}{n} \right) 
              \cdot \frac{1}{m^2} \sum_{u} [d_z(u) - \indicator{x = z}]
\\&=& \left(M_{\bar y \bar z} - \indicator{y=z} \cdot \frac{m}{n} \right) 
              \cdot \frac{ m - n \cdot \indicator{x = z}}{m^2},
\end{eqnarray*}
where the second line is due to \Cref{Thm:CM_LCC}.
\end{proof}
\Cref{Thm:CM_ACC} shows that the expected value of any average closure coefficient 
$\acc{xy}{z}$ is mainly determined by $M_{\bar y \bar z}$, a second-order moment 
of the degree sequence. 
In the \lawyer~dataset, we have $M_{io} = 166.15 \ll 227.41 = M_{ii}$, 
meaning that $\expect[\acc{io}{i}] \ll \expect[\acc{io}{o}]$.
This result (partly) explains the asymmetry
observed in \Cref{fig:GAcc1}: the different coefficients are related to 
different moments of the joint degree sequence of the network, at least for graphs sampled 
from a configuration model with different empirical joint degree sequences.

Finally, we can also determine the expected value of global directed closure coefficients
under the configuration model, as given in Theorem~\ref{Thm:CM_GCC}.
Again we first present a useful lemma, which is analogous to 
Lemma~\ref{Lem:WedgeEndDegree}.

\begin{lemma}\label{Lem:WedgeEndDegree_uw}
Suppose $G$ is a random directed graph sampled from the directed configuration model 
with joint degree sequence $S$. 
Let $(u, v, w)$ be a random type-$xy$ wedge, then 
\begin{enumerate}
\item $\expect\left[\dirdeg{\bar z}{w} \mid S \right] = (n / m) \cdot M_{\bar y  \bar z } \cdot (1 + o(1))$;
\item $\expect\left[\dirdeg{z}{u} \mid S \right] = (n / m) \cdot M_{x  z } \cdot (1 + o(1))$;
\item $\expect\left[\dirdeg{z}{u} \dirdeg{\bar z}{w} \mid S \right] =
\expect\left[\dirdeg{z}{u} \mid S \right] \cdot \expect\left[\dirdeg{\bar z}{w} \mid S \right]  \cdot (1 + o(1))$.
\end{enumerate}
\end{lemma}
\begin{proof}
Result 1 is a corollary of Lemma~\ref{Lem:WedgeEndDegree}:
\[
\expect[\dirdeg{\bar z}{w} \mid S]
= \expect\left[ \expect[\dirdeg{\bar z}{w} \mid S, u] \mid S \right]
\sim \frac{n M_{\bar y \bar z}}{m}.
\]
Result 2 is a corollary of result 1: $(u, v, w)$ being
a type-$xy$ wedge is equivalent to $(w, v, u)$ being a type-$\bar y \bar x$ wedge.

Now we show the last result. 
Conditional on the degree sequence, for any node triple $(u^*, v^*, w^*)$, it forms 
an $xy$-wedge with probability
\[
\frac{\dirdeg{x}{u^*}\dirdeg{\bar x}{v^*}}{m} \cdot \frac{\dirdeg{y}{v^*}\dirdeg{\bar y}{w^*}}{m} \cdot (1 + o(1)) 
\sim C \cdot  \dirdeg{x}{u^*} \dirdeg{\bar y}{w^*}
\]
where $C$ is a constant independent of $u^*$ and $w^*$.
Therefore, for any node pair $u^*$ and $w^*$, they are the two ends of a random wedge with probability
$
\prob\left[u = u^*, w = w^* \mid S\right] \propto 
\dirdeg{x}{u^*}  \dirdeg{\bar y}{w^*} \cdot (1 + o(1)),
$
and thus 
\[
\prob\left[u = u^*, w = w^* \mid S\right] \sim \frac{\dirdeg{x}{u^*} \dirdeg{\bar y}{w^*}}{\sum_{u, w \in V}\dirdeg{x}{u}  \dirdeg{\bar y}{w}} = \frac{\dirdeg{x}{u^*} \dirdeg{\bar y}{w^*}}{m^2}.
\]
Consequently, we have
\begin{eqnarray*}
\expect\left[\dirdeg{z}{u} \dirdeg{\bar z}{w} \mid S \right]
  &=&  \sum_{u^*, w^* \in V} \dirdeg{z}{u^*} \dirdeg{\bar z}{w^*} \prob\left[u = u^*, w = w^* \mid S\right]
\\&\sim&   \sum_{u^*, w^* \in V} \dirdeg{z}{u^*} \dirdeg{\bar z}{w^*} \cdot \frac{\dirdeg{x}{u^*} \dirdeg{\bar y}{w^*}}{m^2}
\\&=& \frac{n M_{x z }}{m} \cdot \frac{n M_{\bar y  \bar z }}{m}
\\&\sim& \expect\left[\dirdeg{z}{u} \mid S \right] \cdot \expect\left[\dirdeg{\bar z}{w} \mid S \right],
\end{eqnarray*}
which completes the proof.
\end{proof}

\begin{theorem}   \label{Thm:CM_GCC}
Let $S$ be a joint degree sequence and $G$ be a random directed graph 
generated from the directed configuration model with $S$.
For any global directed closure coefficient $\gcc{xy}{z}$,
\begin{equation*}
\expect[\gcc{xy}{z} \mid S] = 
\left(M_{\bar y \bar z} - \indicator{y=z} \cdot \frac{m}{n} \right) 
\cdot \left(M_{x  z} - \indicator{x=z} \cdot \frac{m}{n} \right) 
\cdot\frac{n^2}{m^3} \cdot (1 + o(1)).
\end{equation*}
\end{theorem}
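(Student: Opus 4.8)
The plan is to turn this into a directed-triangle count. First I would write $\gcc{xy}{z} = \wgca{xy}{z}/\wga{xy}$ and observe that the denominator is \emph{deterministic} given $S$: counting $xy$-wedges from the center node,
\[
\wga{xy} \;=\; \textstyle\sum_{v\in V} \dirdeg{\oppo x}{v}\bigl(\dirdeg{y}{v}-\indicator{\oppo x = y}\bigr) \;=\; n\bigl(M_{\oppo x y} - \tfrac m n\indicator{\oppo x = y}\bigr),
\]
which involves only the degree sequence, since the directed configuration model realizes the prescribed in- and out-degrees exactly. Hence $\expect[\gcc{xy}{z}\mid S] = \expect[\wgca{xy}{z}\mid S]/\wga{xy}$, and it suffices to estimate the numerator.

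Next I would estimate $\expect[\wgca{xy}{z}\mid S]$ directly. Each $z$-closed $xy$-wedge is an ordered triple of distinct nodes $(u,v,w)$ --- head, center, tail --- carrying three directed edges whose orientations are fixed by $x,y,z$: the $u$--$v$ edge is direction $x$ at $u$, the $v$--$w$ edge is direction $y$ at $v$, and the closing $u$--$w$ edge is direction $z$ at $u$. By linearity, $\expect[\wgca{xy}{z}\mid S]$ is the sum over all such triples of the probability that these three edges are simultaneously present. Using the same sparse-configuration-model heuristic as in \Cref{Thm:CM_LCC} --- a direction-$\alpha$ edge at $a$ toward $b$ is present with probability $\sim \dirdeg{\alpha}{a}\dirdeg{\oppo\alpha}{b}/m$, the three edge events factorize up to $(1+o(1))$, and a stub already committed to one of the triangle's edges is unavailable to another --- the per-triple probability is, to leading order,
\[
\tfrac{1}{m^3}\;\dirdeg{x}{u}\bigl(\dirdeg{z}{u}-\indicator{x=z}\bigr)\cdot\dirdeg{\oppo x}{v}\bigl(\dirdeg{y}{v}-\indicator{\oppo x=y}\bigr)\cdot\dirdeg{\oppo y}{w}\bigl(\dirdeg{\oppo z}{w}-\indicator{y=z}\bigr),
\]
one factor per triangle vertex, each indicator recording a half-edge shared between two of the three triangle edges at that vertex.

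Then I would sum over distinct triples. For sparse $S$ the off-diagonal terms dominate, so up to $(1+o(1))$ this sum is the product of the three single-node sums, which evaluate to $\sum_u \dirdeg{x}{u}(\dirdeg{z}{u}-\indicator{x=z}) = n(M_{xz}-\tfrac m n\indicator{x=z})$, to $\sum_v \dirdeg{\oppo x}{v}(\dirdeg{y}{v}-\indicator{\oppo x=y}) = \wga{xy}$, and to $\sum_w \dirdeg{\oppo y}{w}(\dirdeg{\oppo z}{w}-\indicator{y=z}) = n(M_{\oppo y\oppo z}-\tfrac m n\indicator{y=z})$ (using $M_{\oppo y\oppo z}=M_{\bar y\bar z}$ and $\indicator{\oppo y=\oppo z}=\indicator{y=z}$). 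Dividing by $\wga{xy}$ cancels the middle factor exactly and leaves $\expect[\gcc{xy}{z}\mid S]= \tfrac{n^2}{m^3}\bigl(M_{\bar y\bar z}-\indicator{y=z}\tfrac m n\bigr)\bigl(M_{xz}-\indicator{x=z}\tfrac m n\bigr)(1+o(1))$, as claimed.

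The main obstacle is the bookkeeping of the three indicator corrections --- one at each triangle vertex, for a half-edge shared by two of its incident triangle edges --- and, in particular, recognizing that the center-vertex correction reproduces $\wga{xy}$ exactly, so that it cancels with the denominator (this is what removes $M_{\oppo x y}$ from the final answer). The only genuinely probabilistic point, handled at exactly the level of rigor used for \Cref{Thm:CM_LCC,Thm:CM_ACC}, is justifying the factorization of the three edge-presence events and the negligibility of diagonal (repeated-node) triples; both rest on sparsity of the degree sequence and are the source of the $(1+o(1))$. One could instead aggregate $\expect[\wgc{xy}{z}{u}\mid S]$ over $u$ using \Cref{Thm:CM_LCC} and then divide by $\wga{xy}$, but the direct triple count is cleaner because the cancellation of $\wga{xy}$ is manifest.
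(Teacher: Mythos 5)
Your argument is correct and, at the level of rigor the paper itself employs, it is essentially the same calculation as the published proof: both rest on the stub-matching edge probability $\sim \outdeg{u}\indeg{v}/m$, leading-order factorization justified by sparsity, and the same two size-biased quantities $\tfrac{n}{m}\bigl(M_{xz}-\tfrac{m}{n}\indicator{x=z}\bigr)$ and $\tfrac{n}{m}\bigl(M_{\bar y \bar z}-\tfrac{m}{n}\indicator{y=z}\bigr)$ attached to the head and tail. The only difference is organizational: you enumerate closed-wedge triples directly and cancel the deterministic denominator $\wga{xy}$, whereas the paper averages the per-wedge closure probability from \Cref{Thm:CM_LCC} over the size-biased head and tail of a uniformly random wedge; the center-vertex cancellation you point out is precisely why $M_{\bar x y}$ is likewise absent from the paper's derivation.
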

\begin{proof}
For a random type-$xy$ wedge $(u,v,w)$, we have shown that the probability
of it being $z$-closed is of the order
$(\dirdeg{z}{u} - \indicator{x = z}) \cdot (\dirdeg{\bar z}{w} - \indicator{\bar y = \bar z}) / m$.
Different from the proof of \Cref{Thm:CM_LCC} where node $u$ is fixed, here we do not 
fix node $u$, and meaning that both node $u$ and node $w$ are random. 
\begin{eqnarray*}
\expect\left[\gcc{xy}{z} \mid S \right] 
  &\sim& \frac{1}{m} \cdot {\expect\left[  (\dirdeg{z}{u} - \indicator{x = z}) \cdot (\dirdeg{\bar z}{w} - \indicator{\bar y = \bar z})  \mid S \right] }
\\&\sim& \frac{1}{m} \cdot {\left( \expect[\dirdeg{z}{u} \mid S]  - \indicator{x=z} \right) \left( \expect[\dirdeg{\bar z}{w} \mid S] - \indicator{\bar y = \bar z} \right)  }
\\ &\sim& \frac{1}{m} \cdot \left( \frac{n}{m} \cdot M_{xz} - \indicator{x=z} \right)  \cdot \left( \frac{n}{m} \cdot M_{y \bar z} - \indicator{\bar y = \bar z} \right),
\end{eqnarray*}
where the second line is due to the last result in Lemma~\ref{Lem:WedgeEndDegree_uw},
and the last line is due to the first two results in Lemma~\ref{Lem:WedgeEndDegree_uw}.
\end{proof}

As a byproduct of our analysis,
the proof of \Cref{Thm:CM_GCC} also shows
that, under the directed configuration model, the probability that a wedge is
closed is independent of the center node, and thus equal to the network-level
average. This observation gives us the expected value of Fagiolo's directed local clustering
coefficients~\cite{fagiolo2007clustering} under this random graph model as well.

\begin{proposition}   \label{prop:CM_clustering}
Let $S$ be a joint degree sequence and $G$ be a random directed graph 
generated from the directed configuration model with $S$.
For local directed clustering coefficient $\lccc{xy}{u}$,
\begin{equation*}
\expect[\lccc{xy}{u} \mid S] = \expect[\gcc{\bar x y}{i} \mid S].
\end{equation*}
\end{proposition}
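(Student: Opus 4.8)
The plan is to build on the observation noted just above: in the directed configuration model the probability that a wedge is closed is, to leading order, insensitive to which node is its center. The first step is to express both sides of the identity in terms of the \emph{same} family of wedges. A directed wedge is an ordered pair of edges sharing exactly one node; declaring that node the center recovers Fagiolo's picture (\Cref{Fig:DefClu}), while declaring the non-center endpoint of the first edge its head recovers ours (\Cref{Fig:DefCC}). Matching wedge types and closing directions between the two figures, a type-$xy$ clustering wedge centered at $u$ is, as an ordered pair of edges, the same object as a type-$\bar x y$ closure wedge with center $u$, and a Fagiolo-closed clustering wedge corresponds precisely to an $i$-closed closure wedge. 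So, writing $\widetilde W_{xy}(u)$ and $\widetilde T_{xy}(u)$ for the numbers of (respectively, closed) type-$xy$ clustering wedges centered at $u$, we have the exact graph identities $\lccc{xy}{u} = \widetilde T_{xy}(u)/\widetilde W_{xy}(u)$ and $\gcc{\bar x y}{i} = \big(\sum_v \widetilde T_{xy}(v)\big)/\big(\sum_v \widetilde W_{xy}(v)\big)$; moreover $\widetilde W_{xy}(u)$ and $\wga{\bar x y} = \sum_v \widetilde W_{xy}(v)$ are deterministic functions of the degree sequence $S$, since they merely count wedges.

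The second step is to show that $\expect[\widetilde T_{xy}(u)\mid S]$ equals $\widetilde W_{xy}(u)$ times a quantity that does not depend on $u$. Fix a type-$xy$ clustering wedge centered at $u$ with non-center nodes $a$ (on the first edge) and $b$ (on the second). It is closed iff a particular directed edge between $a$ and $b$ is realized, which by the stub-counting in the proof of \Cref{Thm:CM_GCC} happens with probability $\tfrac1m(\dirdeg{o}{b}-\indicator{y=i})(\dirdeg{i}{a}-\indicator{x=o})(1+o(1))$, the indicators discounting the stubs of $a$ and $b$ already spent on the wedge. Averaging over the $\widetilde W_{xy}(u)$ wedges centered at $u$: the node $a$ is the far end of a uniformly random $x$-stub of $u$ and $b$ the far end of a uniformly random $y$-stub, and in the configuration model these far ends are, up to lower-order corrections, independent and distributed over the nodes with probabilities proportional to $\dirdeg{\bar x}{\cdot}$ and $\dirdeg{\bar y}{\cdot}$ respectively --- a law that involves only $S$, not $u$. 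Hence $\expect[\dirdeg{i}{a}\mid S] = \tfrac nm M_{\bar x i}(1+o(1))$ and $\expect[\dirdeg{o}{b}\mid S] = \tfrac nm M_{\bar y o}(1+o(1))$, giving $\expect[\widetilde T_{xy}(u)\mid S] = \widetilde W_{xy}(u)\,\rho_{xy}(S)$ with $\rho_{xy}(S) = \tfrac{n^2}{m^3}\big(M_{\bar y o}-\tfrac mn\indicator{y=i}\big)\big(M_{\bar x i}-\tfrac mn\indicator{x=o}\big)(1+o(1))$ independent of $u$. Substituting into the identities of step one, $\expect[\lccc{xy}{u}\mid S] = \rho_{xy}(S)$, whereas $\expect[\gcc{\bar x y}{i}\mid S]$ is the $\widetilde W_{xy}$-weighted average of $\rho_{xy}(S)$ over the centers $v$, hence also equals $\rho_{xy}(S)$; the two therefore agree. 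As a cross-check, $\rho_{xy}(S)$ is exactly what \Cref{Thm:CM_GCC} yields after the substitution $(x,y,z)\mapsto(\bar x,y,i)$.

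The step I expect to be the main obstacle is the center-independence claim inside step two: verifying that, in the configuration model, the degree-law of the far end of a random stub of $u$, together with the approximate independence of the two far ends, is unaffected by the identity of $u$ to leading order. This is the familiar configuration-model bookkeeping --- self-matches, re-use of the stubs already consumed by the wedge, and multi-edges each contributing only $o(\cdot)$ terms for sparse $S$ --- that already appears in the proof of \Cref{Thm:CM_GCC}, so it needs care rather than new ideas. The only other point requiring attention is the purely combinatorial dictionary: reading off from \Cref{Fig:DefClu,Fig:DefCC} that the clustering type $xy$ matches the closure type $\bar x y$ and that Fagiolo-closure matches $i$-closure, which is what fixes the exact right-hand side.
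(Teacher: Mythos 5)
Your proposal is correct and takes essentially the same route as the paper's proof: re-read a type-$xy$ clustering wedge centered at $u$ as a type-$\bar x y$ closure wedge from its head, note that the probability of closure depends only on the (degree-biased, $u$-independent) laws of the two non-center endpoints, and conclude that the local expectation equals the global one. You simply make explicit what the paper leaves implicit --- that the denominators $\widetilde W_{xy}(u)$ and $\wga{\bar x y}$ are deterministic given $S$ so expectations pass through the ratios, plus the closed form $\rho_{xy}(S)$ and its consistency with \Cref{Thm:CM_GCC}.
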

\begin{proof}
Consider a random wedge with $u$ being the center node, $(v, u, w)$, which is
an $\bar x y$-wedge to node $v$.
From the definition of $\lccc{xy}{u}$, this wedge is closed if it is $i$-closed to node $v$.
Since the probability of this wedge being $i$-closed is independent of node $u$, 
it is the same as if we randomly choose a wedge without constraining
node $u$ as the center, and thus
$
\expect[\lccc{xy}{u} \mid S] = \expect[\gcc{\bar x y}{i} \mid S].
$
\end{proof}

\begin{figure*}[t] 
   \centering
   {\large \lawyer} \\
   \includegraphics[width=6in]{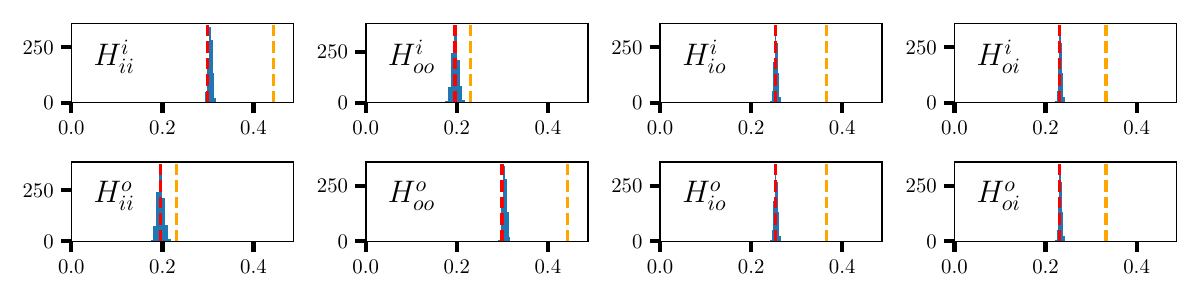} \\
   \includegraphics[width=6in]{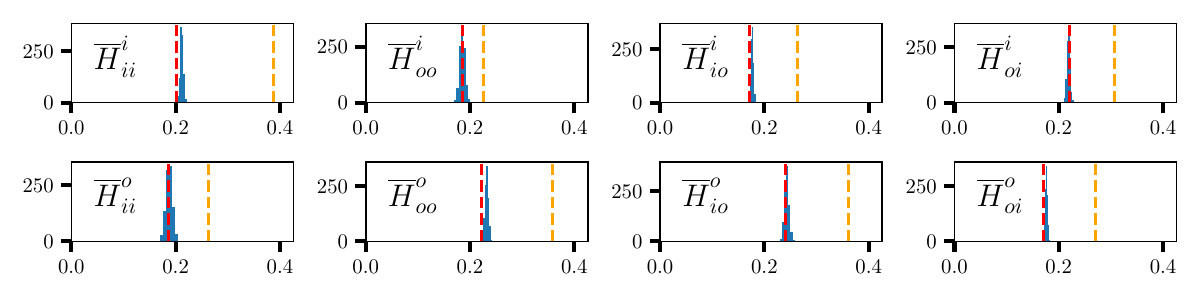} \\
   {\large \hepph}\\
   \includegraphics[width=6in]{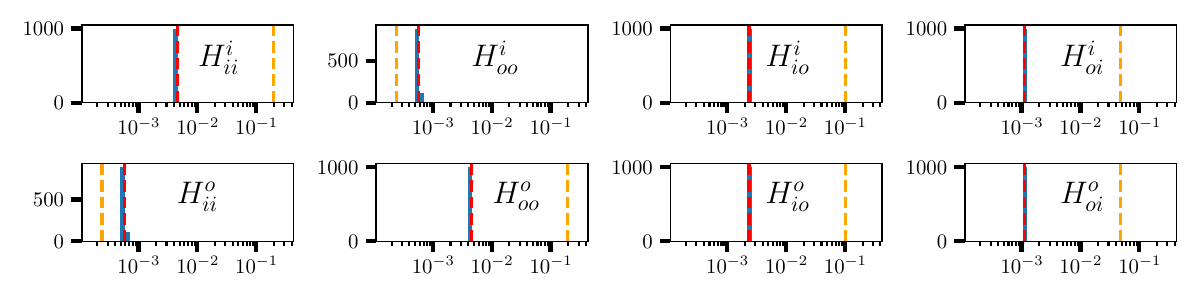} \\
   \includegraphics[width=6in]{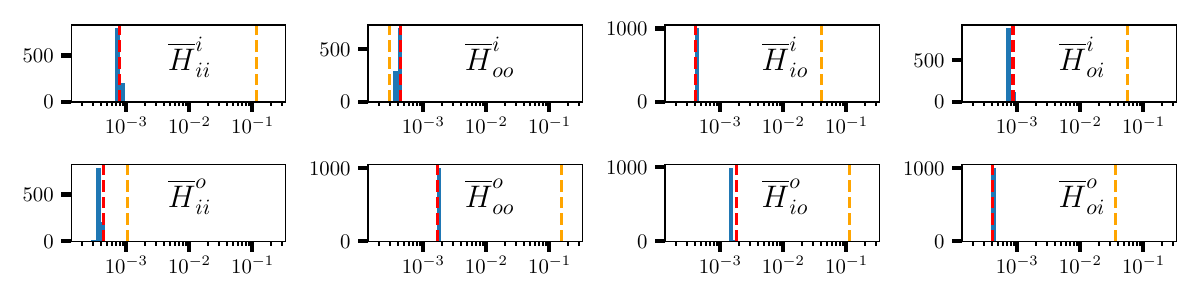} 
   \caption{Histogram of each global closure coefficient (first two rows) and average closure 
   coefficient (last two rows) in 1,000 directed configuration model random graphs with 
   the joint degree sequence of the \lawyer~and \hepph~network. The $x$-axis is the value 
   of various directed closure coefficients and the $y$-axis is the frequency. Besides
   the histogram, we also plot the expected value of closure coefficients from 
   \Cref{Thm:CM_ACC,Thm:CM_GCC}~(red) as well as the actual value in the
   original network (orange).
   }
   \label{Fig:CM_GACC}
\end{figure*}

Next, we study the accuracy of the theoretical expected values
of the average and global closure coefficients under the directed configuration model. 
The directed configuration model can be sampled by using
double-edge swaps~\cite{rao1996markov}. To sample graphs from the model,
we begin with an empirical graph (the graph of interest) 
with joint degree sequence $S$. We then 
select a pair of random directed edges to swap, which changes the graph
slightly but notably preserves the degree sequence. Taking care to avoid
self-loops and multi-edges~\cite{fosdick2018configuring}, the double-edge swap
can be interpreted as a random walk in the space of simple graphs 
with the same degree sequence, and the stationary distribution of this random walk 
is the uniform distribution over the network space.
The swap is then repeated many times to generate graphs that are sampled from
the stationary distribution. The mixing time of these random walks are generally believed 
to be well-behaved, but few rigorous results are known~\cite{greenhill2014switch}.


We generate 1,000 random graphs with the same joint degree sequence as the 
\lawyer~and \hepph~network; to generate each graph,
we repeat the edge-swapping procedure 10,000 times.
\Cref{Fig:CM_GACC} shows histograms of the distribution of each
average and global closure coefficient under this configuration model. 
We see that our approximate formulas from \Cref{Thm:CM_ACC,Thm:CM_GCC} 
are very accurate even when the network is only moderate in size ($n=71$
for \lawyer). 
The theoretical formulas are only guaranteed to be accurate on large sparse
networks, and we do observe a small difference between the expected and
simulated means (\emph{e.g.}, $\acc{ii}{i}$).

The simulation shows that the average and global closure coefficients
have low variance under this configuration model, 
and the values in the original network deviate significantly from these distributions.
More specifically, the values in the citation network are mostly larger than
the distribution by orders of magnitude, and the exceptions are $\gcc{oo}{i}$ and $\gcc{ii}{o}$,
as well as their average closure coefficient counterparts, where the real-world values are low
due to the natural lack of cycles in the nature of citation networks.
This provides evidence that the directed closure coefficients of real-world
networks capture interesting empirical structure
beyond what one would expect from a graph drawn uniformly at random from 
the space of graphs with the same joint degree sequence.

\section{Case study in node-type prediction}
\label{sec:application}

Now that we have a theoretical understanding of our directed closure
coefficients, we turn to applications. Directed closure coefficients are a new
measurement for directed triadic closure and thus can serve as a feature for
network analysis and inference. In this section, we present two illustrative
examples to exhibit the strong predictive potential in directed closure
coefficients. Specifically, we present two case studies of node-type
classification tasks, where we show the utility of local directed closure
coefficients in predicting the node type in the $\lawyer$ and $\florida$ datasets analyzed above.
By using an interpretable regularized model, we are able to identify the
salient directed closure coefficients that are useful for prediction.
This analysis reveals new social status patterns in the social network 
and also automatically identifies previously-studied triadic patterns in food webs as good predictors.

\subsection{Case Study I: Identifying Lawyer Status in an Advisory Network}
The $\lawyer$ dataset collected by Lazega~\cite{lazega2001collegial} is a social network of lawyers
at a corporate firm. There is a node for each of the 71
lawyers, and each is labeled with a status level---\emph{partner} or \emph{associate}.
Of the 71 lawyers in the dataset, 36 are partners and 35 are associates.
The edges come from survey responses on who individuals go to
for professional advice: there is an edge from $i$ to $j$ if person $i$
went to person $j$ for professional advice.
Of the edges, there are
395 between two partners;
196 between two associates;
59 from partner to associate; and
242 from associate to partner.

In this case study, our goal is to predict the status of the lawyers (associate or partner) 
with predictors extracted from the advice network.
We consider the following six sets of network attributes as predictors:
\begin{enumerate}   [(1)]
\item \textbf{degree}: the in- and out-degree, and the number of reciprocal edges at each node;
\item \textbf{degree + 1-hop}: the union of the degree predictors and
four neighbor-degrees: the average in- and out-degree of all in- and out-neighbors of each node;
\item \textbf{closure}: the eight local directed closure coefficients defined in this paper;
\item \textbf{closure + degree}: the union of the closure coefficients and the degree predictors;
\item \textbf{clustering}: the four local directed clustering coefficients as defined by 
Fagiolo~\cite{fagiolo2007clustering}; and
\item \textbf{clustering + degree}: the union of the local directed clustering coefficients and the degree predictors.
\end{enumerate}

\begin{table}[t]
  \caption{Validation set accuracy and AUC in classifying node types
  in the $\lawyer$ dataset (partner  vs.\ associate).
  Our proposed local directed closure coefficients are the best set of predictors,
  illustrating the utility of directed closure coefficients in node-level prediction tasks. 
  In contrast, the local directed clustering coefficients~\cite{fagiolo2007clustering} are not as effective.}
  \setlength{\tabcolsep}{3pt}
  \begin{tabular}{r c c c c c c}
    \toprule
             & \small{degree} & \small{degree} & \small{closure}         & \small{closure}   & \small{clustering} & \small{clustering} \\
             &       & \small{+ 1-hop} &                 & \small{+ degree} &            & \small{+ degree}   \\
    \midrule
    \small{accuracy} & 0.7884 & 0.8270 & \textbf{0.8743} & 0.8585   & 0.6255     & 0.7884     \\
    \small{AUC}      & 0.8763 &  0.8978 & \textbf{0.9235} & 0.9183   & 0.6362     & 0.8765     \\
  \bottomrule    
  \end{tabular}
  \label{tab:accuracy_lawyer}
\end{table}

\begin{figure}[tb]
   \centering
   \vspace{5pt}
   \includegraphics[width=\columnwidth]{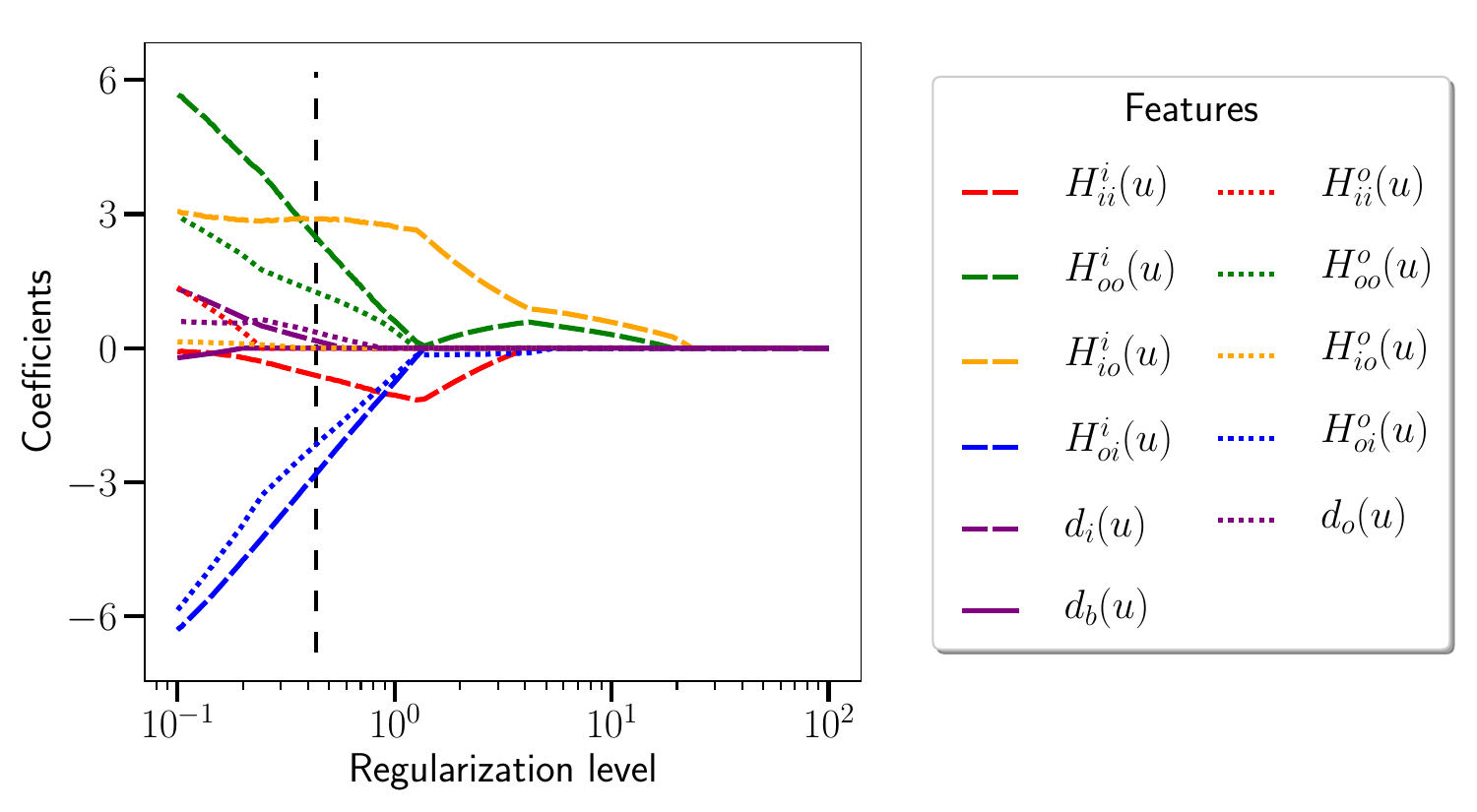} 
   \vspace{-3pt}
   \caption{Regularization path of the $\ell_1$-regularized logistic regression model 
   with predictor set \textbf{closure + degree}  for the model of the $\lawyer$ dataset.
   The $x$-axis is the regularization level, 
   and the $y$-axis is the regression coefficient for each predictor. The vertical black  
   dashed line represents the optimal regularization level obtained from cross-validation.
   The degree attributes are only selected at very low regularization levels, 
   and various local directed closure coefficients dominate the prediction model.}
   \label{fig:RegPath_lawyer}
\end{figure}

For each predictor set, we use 100 random instances of 3-fold cross-validation
to select an $\ell_1$-regularized logistic regression model for predicting whether
or not a node is a partner (\ie, the positive label is for partner).
\Cref{tab:accuracy_lawyer} reports validation set accuracy and AUC.
(Even though different predictor sets have different dimensions,
evaluating the performance in this way makes them comparable.)
The predictors that include our local directed closure coefficients substantially
outperform the other predictor sets. The predictor
set that includes both degrees and closure coefficients slightly underperforms
the one with only closure coefficients, indicating slight overfitting in the
training data, which implies that the degree attributes provides redundant and
noisy information in addition to the closure coefficient attributes in this
prediction task.
In contrast, adding the \textit{1-hop degrees} does not give as much improvement
as they do not consider the triadic closure factors.

To understand how the directed local closure coefficients improve prediction performance,  
we analyze the regularization path of our model, a standard method in sparse regression 
to visualize the predictors at each regularization level~\cite{friedman2010regularization}. 
\Cref{fig:RegPath_lawyer} shows the regularization path for the predictor
set that includes both the local directed closure coefficients and the degree predictors.

We highlight a few important observations.
First, as regularization is decreased, directed local closure coefficients are 
selected before the degree predictors, indicating that the closure coefficients are more relevant in prediction than degrees.
Second, the two predictors with largest positive coefficients at the optimal level of regularization
are $\lcc{io}{i}{u}$ and $\lcc{oo}{i}{u}$, meaning that
lawyers with partner status are more likely to advise people who also advise others. 
In contrast, the in-degree $\indeg{u}$ predictor is not one of the first selected, which implies that
it is not \emph{how many one advises} but rather \emph{who one advises}
that is correlated with partner status.
Finally, the two predictors with the largest negative coefficients at the
optimal regularization are $\lcc{oi}{i}{u}$ and $\lcc{oi}{o}{u}$, meaning that
partner-status lawyers are less likely to interact with other lawyers with whom
they share an advisor.

\begin{table}[t]
  \caption{Validation set accuracy and AUC in classifying node types
  in the $\florida$ dataset (fish vs.\ non-fish).
  Our proposed local directed closure coefficients are again the best set of predictors
  (see also \cref{tab:accuracy_lawyer}),
  illustrating the utility of directed closure coefficients in node-level prediction tasks
  outside of social network analysis.}
  \setlength{\tabcolsep}{3pt}
  \begin{tabular}{r c c c c c c}
    \toprule
             & \small{degree} & \small{degree} & \small{closure}         & \small{closure}   & \small{clustering} & \small{clustering} \\
             &       & \small{+ 1-hop} &                 & \small{+ degree} &            & \small{+ degree}   \\
    \midrule
    \small{accuracy} & 0.6250 & 0.8466  & \textbf{0.8735} & 0.8700   & 0.6875     & 0.7366     \\
    \small{AUC}      & 0.6772 & 0.9127  & \textbf{0.9538} & 0.9529   & 0.7472     & 0.7834     \\
  \bottomrule    
  \end{tabular}
  \label{tab:accuracy_florida}
\end{table}

\begin{figure}[t]
   \centering
   \vspace{5pt}
   \includegraphics[width=\columnwidth]{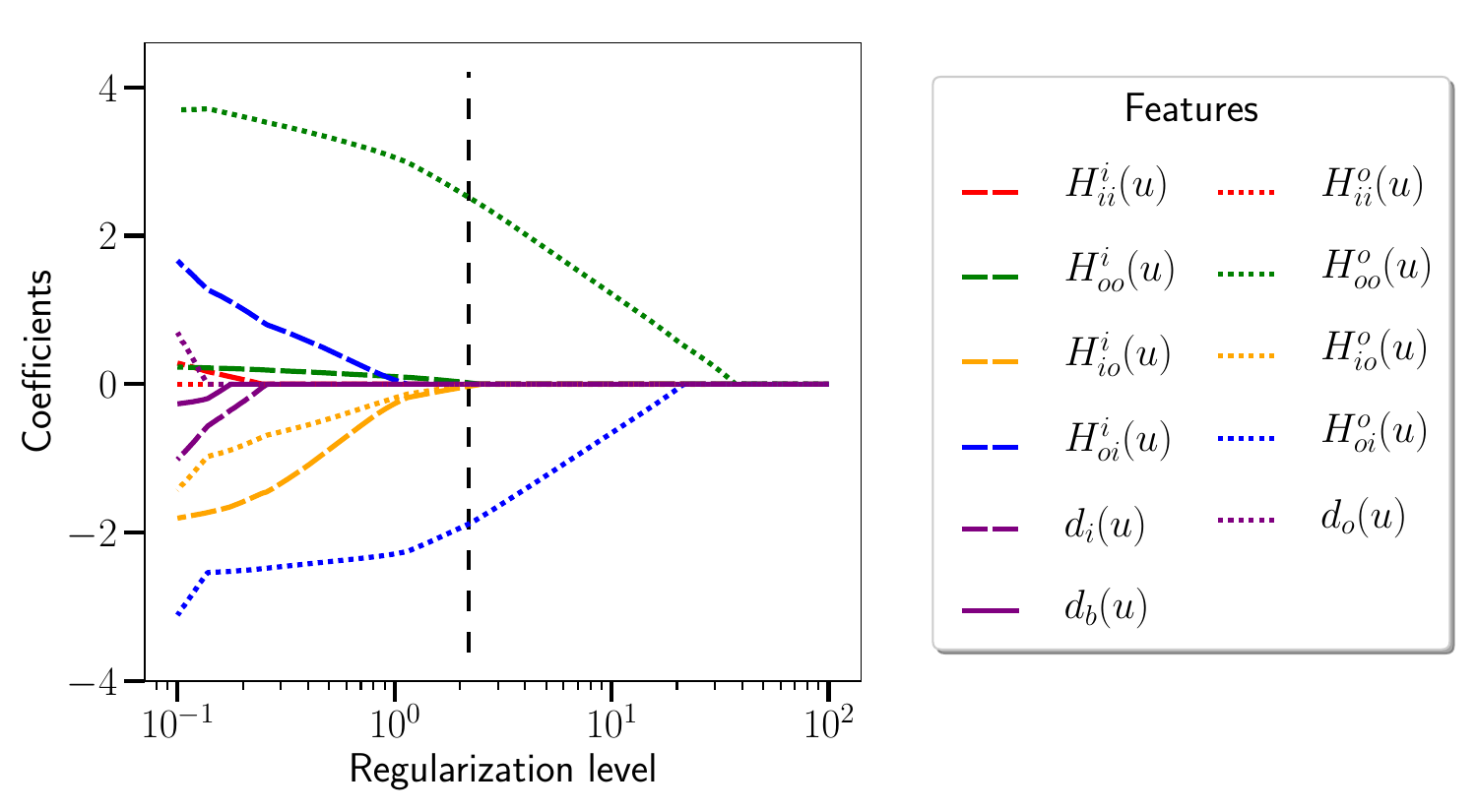} 
   \vspace{-3pt}
   \caption{Regularization path of the $\ell_1$-regularized logistic regression model 
   with predictor set \textbf{closure + degree} for the model of the $\florida$ dataset.
   The $x$-axis is the regularization level, 
   and the $y$-axis is the regression coefficient for each predictor. The vertical black  
   dashed line represents the optimal regularization level obtained from cross-validation.
   The degree attributes are only selected at very low regularization levels, 
   and various local directed closure coefficients dominate the prediction model.}
   \label{fig:RegPath_florida}
\end{figure}

\subsection{Case Study II: Identifying Fish in a Food Web}
We now perform a similar network prediction task.
Here, the data comes from an entirely different domain (ecology),
but we still find that our local directed closure coefficients are effective
predictors for identifying node type.

More specifically, we study a food web collected from the Florida Bay~\cite{ulanowicz2005network}.
In this dataset, nodes correspond to ecological compartments (roughly, species)
and edges represent directed carbon exchange (roughly, who-eats-whom).
There is an edge from $i$ to $j$ if energy flows from compartment $i$ to compartment $j$.
There are 128 total compartments, of which 48 correspond to fish. Our prediction
task in this case study is to identify which nodes are fish using basic node-level features.
The dataset contains 2,106 edges, of which
268 are between fish;
699 are between non-fish;
648 are from a fish to a non-fish; and
491 are from a non-fish to a fish.

We used the same model selection procedure as in the first case study on the $\lawyer$
dataset described above.
\Cref{tab:accuracy_florida} lists the accuracy and AUC of the $\ell_1$-regularized logistic
regression model. We again find that our proposed directed closure coefficients form the best
set of predictors for this task. We also find minimal difference in prediction accuracy when including
degree features, indicating that the degree features provide little predictive information beyond
the directed closure coefficients.

In fact, the regularization path shows that the two closure coefficients $\lcc{oo}{o}{u}$ and $\lcc{oi}{o}{u}$ 
are the most important predictors for identifying fish (\Cref{fig:RegPath_florida}), the former being
positively correlated with the fish type and the latter positively correlated with the non-fish type.
The type of closure associated with the coefficient $\lcc{oo}{o}{u}$ has previously been identified
as important for the network dynamics of overfishing~\cite{bascompte2005interaction}, so it is reasonable
that this predictor is important.



\section{Conclusion}
\label{sec:discuss}

Triadic closure and local clustering are fundamental properties of complex networks.
Although these concepts have a storied history, only recently have there been
local closure measurements (for undirected graphs) that accurately reflect the ``friend of friend''
mechanism pervasive in discussions of closure. In this paper, we have extended the
subtle definitional difference of initiator-based vs.\ center-based clustering
to directed networks, where clustering in general has received relatively little attention. 
We observed a seemingly
counter-intuitive result that the same induced triadic structure can produce two
different average directed closure coefficients; however, this asymmetry is
understandable through our analysis of closure coefficients within a configuration
model, which points to the role of moments of the in- and out-degree distributions.
Additional analysis showed that this asymmetry can be arbitrarily large.  

One of the benefits of these new local network measurements is that
they can be used as predictors for statistical inference on networks.
Two case studies showed that our directed closure
coefficients are good predictors at identifying node types in two starkly 
different domains--social networks and ecology--with simple models
using these features achieving over 92\% mean 
AUC in both cases. Furthermore, directed closure coefficients are much better 
predictors than generalizations of clustering coefficients to directed graphs 
for these tasks. We anticipate that closure coefficients will become a useful tool 
for understanding the basic local structure of directed complex networks.

\begin{acks}
This research has been supported in part by an ARO Young Investigator Award,
NSF Award DMS-1830274,
and ARO Award W911NF-19-1-0057.
\end{acks}

\bibliographystyle{ACM-Reference-Format}
\bibliography{refs,more-refs}


\begin{thebibliography}{54}


\ifx \showCODEN    \undefined \def \showCODEN     #1{\unskip}     \fi
\ifx \showDOI      \undefined \def \showDOI       #1{#1}\fi
\ifx \showISBNx    \undefined \def \showISBNx     #1{\unskip}     \fi
\ifx \showISBNxiii \undefined \def \showISBNxiii  #1{\unskip}     \fi
\ifx \showISSN     \undefined \def \showISSN      #1{\unskip}     \fi
\ifx \showLCCN     \undefined \def \showLCCN      #1{\unskip}     \fi
\ifx \shownote     \undefined \def \shownote      #1{#1}          \fi
\ifx \showarticletitle \undefined \def \showarticletitle #1{#1}   \fi
\ifx \showURL      \undefined \def \showURL       {\relax}        \fi
\providecommand\bibfield[2]{#2}
\providecommand\bibinfo[2]{#2}
\providecommand\natexlab[1]{#1}
\providecommand\showeprint[2][]{arXiv:#2}

\bibitem[\protect\citeauthoryear{Ahnert and Fink}{Ahnert and Fink}{2008}]%
        {ahnert2008clustering}
\bibfield{author}{\bibinfo{person}{Sebastian~E Ahnert} {and}
  \bibinfo{person}{Thomas~MA Fink}.} \bibinfo{year}{2008}\natexlab{}.
\newblock \showarticletitle{Clustering signatures classify directed networks}.
\newblock \bibinfo{journal}{\emph{{Physical Review E}}} \bibinfo{volume}{78},
  \bibinfo{number}{3} (\bibinfo{year}{2008}), \bibinfo{pages}{036112}.
\newblock


\bibitem[\protect\citeauthoryear{Backstrom, Huttenlocher, Kleinberg, and
  Lan}{Backstrom et~al\mbox{.}}{2006}]%
        {backstrom2006group}
\bibfield{author}{\bibinfo{person}{Lars Backstrom}, \bibinfo{person}{Dan
  Huttenlocher}, \bibinfo{person}{Jon Kleinberg}, {and}
  \bibinfo{person}{Xiangyang Lan}.} \bibinfo{year}{2006}\natexlab{}.
\newblock \showarticletitle{Group formation in large social networks:
  membership, growth, and evolution}. In \bibinfo{booktitle}{\emph{{Proceedings
  of the 12th ACM SIGKDD International Conference on Knowledge Discovery and
  Data Mining}}}. ACM, \bibinfo{pages}{44--54}.
\newblock


\bibitem[\protect\citeauthoryear{Ball and Newman}{Ball and Newman}{2013}]%
        {ball2013friendship}
\bibfield{author}{\bibinfo{person}{Brian Ball} {and} \bibinfo{person}{Mark~EJ
  Newman}.} \bibinfo{year}{2013}\natexlab{}.
\newblock \showarticletitle{Friendship networks and social status}.
\newblock \bibinfo{journal}{\emph{{Network Science}}} \bibinfo{volume}{1},
  \bibinfo{number}{1} (\bibinfo{year}{2013}), \bibinfo{pages}{16--30}.
\newblock


\bibitem[\protect\citeauthoryear{Barrat and Weigt}{Barrat and Weigt}{2000}]%
        {barrat2000properties}
\bibfield{author}{\bibinfo{person}{Alain Barrat} {and} \bibinfo{person}{Martin
  Weigt}.} \bibinfo{year}{2000}\natexlab{}.
\newblock \showarticletitle{On the properties of small-world network models}.
\newblock \bibinfo{journal}{\emph{{The European Physical Journal B-Condensed
  Matter and Complex Systems}}} \bibinfo{volume}{13}, \bibinfo{number}{3}
  (\bibinfo{year}{2000}), \bibinfo{pages}{547--560}.
\newblock


\bibitem[\protect\citeauthoryear{Bascompte, Meli{\'a}n, and Sala}{Bascompte
  et~al\mbox{.}}{2005}]%
        {bascompte2005interaction}
\bibfield{author}{\bibinfo{person}{Jordi Bascompte}, \bibinfo{person}{Carlos~J
  Meli{\'a}n}, {and} \bibinfo{person}{Enric Sala}.}
  \bibinfo{year}{2005}\natexlab{}.
\newblock \showarticletitle{Interaction strength combinations and the
  overfishing of a marine food web}.
\newblock \bibinfo{journal}{\emph{{Proceedings of the National Academy of
  Sciences}}} \bibinfo{volume}{102}, \bibinfo{number}{15}
  (\bibinfo{year}{2005}), \bibinfo{pages}{5443--5447}.
\newblock


\bibitem[\protect\citeauthoryear{Boccaletti, Latora, Moreno, Chavez, and
  Hwang}{Boccaletti et~al\mbox{.}}{2006}]%
        {boccaletti2006complex}
\bibfield{author}{\bibinfo{person}{Stefano Boccaletti}, \bibinfo{person}{Vito
  Latora}, \bibinfo{person}{Yamir Moreno}, \bibinfo{person}{Martin Chavez},
  {and} \bibinfo{person}{D-U Hwang}.} \bibinfo{year}{2006}\natexlab{}.
\newblock \showarticletitle{Complex networks: Structure and dynamics}.
\newblock \bibinfo{journal}{\emph{{Physics Reports}}} \bibinfo{volume}{424},
  \bibinfo{number}{4-5} (\bibinfo{year}{2006}), \bibinfo{pages}{175--308}.
\newblock


\bibitem[\protect\citeauthoryear{Brzozowski and Romero}{Brzozowski and
  Romero}{2011}]%
        {brzozowski2011should}
\bibfield{author}{\bibinfo{person}{Michael~J Brzozowski} {and}
  \bibinfo{person}{Daniel~M Romero}.} \bibinfo{year}{2011}\natexlab{}.
\newblock \showarticletitle{Who should {I} follow? {R}ecommending people in
  directed social networks}. In \bibinfo{booktitle}{\emph{{Fifth International
  AAAI Conference on Weblogs and Social Media}}}.
\newblock


\bibitem[\protect\citeauthoryear{Chen and Olvera-Cravioto}{Chen and
  Olvera-Cravioto}{2013}]%
        {chen2013directed}
\bibfield{author}{\bibinfo{person}{Ningyuan Chen} {and}
  \bibinfo{person}{Mariana Olvera-Cravioto}.} \bibinfo{year}{2013}\natexlab{}.
\newblock \showarticletitle{Directed random graphs with given degree
  distributions}.
\newblock \bibinfo{journal}{\emph{{Stochastic Systems}}} \bibinfo{volume}{3},
  \bibinfo{number}{1} (\bibinfo{year}{2013}), \bibinfo{pages}{147--186}.
\newblock


\bibitem[\protect\citeauthoryear{Cheng, Romero, Meeder, and Kleinberg}{Cheng
  et~al\mbox{.}}{2011}]%
        {cheng2011predicting}
\bibfield{author}{\bibinfo{person}{Justin Cheng}, \bibinfo{person}{Daniel~M
  Romero}, \bibinfo{person}{Brendan Meeder}, {and} \bibinfo{person}{Jon
  Kleinberg}.} \bibinfo{year}{2011}\natexlab{}.
\newblock \showarticletitle{Predicting reciprocity in social networks}. In
  \bibinfo{booktitle}{\emph{2011 IEEE Third International Conference on
  Privacy, Security, Risk and Trust and 2011 IEEE Third International
  Conference on Social Computing}}. IEEE, \bibinfo{pages}{49--56}.
\newblock


\bibitem[\protect\citeauthoryear{Davis and Leinhardt}{Davis and
  Leinhardt}{1971}]%
        {davis1967structure}
\bibfield{author}{\bibinfo{person}{James~A Davis} {and} \bibinfo{person}{Samuel
  Leinhardt}.} \bibinfo{year}{1971}\natexlab{}.
\newblock \showarticletitle{The structure of positive interpersonal relations
  in small groups.}
\newblock \bibinfo{journal}{\emph{{Sociological Theories in Progress}}}
  \bibinfo{volume}{2} (\bibinfo{year}{1971}).
\newblock


\bibitem[\protect\citeauthoryear{Fagiolo}{Fagiolo}{2007}]%
        {fagiolo2007clustering}
\bibfield{author}{\bibinfo{person}{Giorgio Fagiolo}.}
  \bibinfo{year}{2007}\natexlab{}.
\newblock \showarticletitle{Clustering in complex directed networks}.
\newblock \bibinfo{journal}{\emph{{Physical Review E}}} \bibinfo{volume}{76},
  \bibinfo{number}{2} (\bibinfo{year}{2007}), \bibinfo{pages}{026107}.
\newblock


\bibitem[\protect\citeauthoryear{Fortunato}{Fortunato}{2010}]%
        {fortunato2010community}
\bibfield{author}{\bibinfo{person}{Santo Fortunato}.}
  \bibinfo{year}{2010}\natexlab{}.
\newblock \showarticletitle{Community detection in graphs}.
\newblock \bibinfo{journal}{\emph{{Physics Reports}}} \bibinfo{volume}{486},
  \bibinfo{number}{3-5} (\bibinfo{year}{2010}), \bibinfo{pages}{75--174}.
\newblock


\bibitem[\protect\citeauthoryear{Fosdick, Larremore, Nishimura, and
  Ugander}{Fosdick et~al\mbox{.}}{2018}]%
        {fosdick2018configuring}
\bibfield{author}{\bibinfo{person}{Bailey~K Fosdick}, \bibinfo{person}{Daniel~B
  Larremore}, \bibinfo{person}{Joel Nishimura}, {and} \bibinfo{person}{Johan
  Ugander}.} \bibinfo{year}{2018}\natexlab{}.
\newblock \showarticletitle{Configuring random graph models with fixed degree
  sequences}.
\newblock \bibinfo{journal}{\emph{{SIAM Review}}} \bibinfo{volume}{60},
  \bibinfo{number}{2} (\bibinfo{year}{2018}), \bibinfo{pages}{315--355}.
\newblock


\bibitem[\protect\citeauthoryear{Friedman, Hastie, and Tibshirani}{Friedman
  et~al\mbox{.}}{2010}]%
        {friedman2010regularization}
\bibfield{author}{\bibinfo{person}{Jerome Friedman}, \bibinfo{person}{Trevor
  Hastie}, {and} \bibinfo{person}{Rob Tibshirani}.}
  \bibinfo{year}{2010}\natexlab{}.
\newblock \showarticletitle{Regularization paths for generalized linear models
  via coordinate descent}.
\newblock \bibinfo{journal}{\emph{{Journal of Statistical Software}}}
  \bibinfo{volume}{33}, \bibinfo{number}{1} (\bibinfo{year}{2010}),
  \bibinfo{pages}{1}.
\newblock


\bibitem[\protect\citeauthoryear{Garlaschelli and Loffredo}{Garlaschelli and
  Loffredo}{2004}]%
        {garlaschelli2004patterns}
\bibfield{author}{\bibinfo{person}{Diego Garlaschelli} {and}
  \bibinfo{person}{Maria~I Loffredo}.} \bibinfo{year}{2004}\natexlab{}.
\newblock \showarticletitle{Patterns of link reciprocity in directed networks}.
\newblock \bibinfo{journal}{\emph{Physical review letters}}
  \bibinfo{volume}{93}, \bibinfo{number}{26} (\bibinfo{year}{2004}),
  \bibinfo{pages}{268701}.
\newblock


\bibitem[\protect\citeauthoryear{Gehrke, Ginsparg, and Kleinberg}{Gehrke
  et~al\mbox{.}}{2003}]%
        {gehrke2003overview}
\bibfield{author}{\bibinfo{person}{Johannes Gehrke}, \bibinfo{person}{Paul
  Ginsparg}, {and} \bibinfo{person}{Jon Kleinberg}.}
  \bibinfo{year}{2003}\natexlab{}.
\newblock \showarticletitle{Overview of the 2003 {KDD Cup}}.
\newblock \bibinfo{journal}{\emph{{ACM SIGKDD Explorations Newsletter}}}
  \bibinfo{volume}{5}, \bibinfo{number}{2} (\bibinfo{year}{2003}),
  \bibinfo{pages}{149--151}.
\newblock


\bibitem[\protect\citeauthoryear{Gleich and Seshadhri}{Gleich and
  Seshadhri}{2012}]%
        {gleich2012vertex}
\bibfield{author}{\bibinfo{person}{David~F Gleich} {and} \bibinfo{person}{C
  Seshadhri}.} \bibinfo{year}{2012}\natexlab{}.
\newblock \showarticletitle{Vertex neighborhoods, low conductance cuts, and
  good seeds for local community methods}. In
  \bibinfo{booktitle}{\emph{{Proceedings of the 18th ACM SIGKDD International
  Conference on Knowledge Discovery and Data Mining}}}. ACM,
  \bibinfo{pages}{597--605}.
\newblock


\bibitem[\protect\citeauthoryear{Greenhill}{Greenhill}{2014}]%
        {greenhill2014switch}
\bibfield{author}{\bibinfo{person}{Catherine Greenhill}.}
  \bibinfo{year}{2014}\natexlab{}.
\newblock \showarticletitle{The switch Markov chain for sampling irregular
  graphs}. In \bibinfo{booktitle}{\emph{{Proceedings of the Twenty-Sixth Annual
  ACM-SIAM Symposium on Discrete Algorithms}}}. SIAM,
  \bibinfo{pages}{1564--1572}.
\newblock


\bibitem[\protect\citeauthoryear{Henderson, Gallagher, Eliassi-Rad, Tong, Basu,
  Akoglu, Koutra, Faloutsos, and Li}{Henderson et~al\mbox{.}}{2012}]%
        {henderson2012rolx}
\bibfield{author}{\bibinfo{person}{Keith Henderson}, \bibinfo{person}{Brian
  Gallagher}, \bibinfo{person}{Tina Eliassi-Rad}, \bibinfo{person}{Hanghang
  Tong}, \bibinfo{person}{Sugato Basu}, \bibinfo{person}{Leman Akoglu},
  \bibinfo{person}{Danai Koutra}, \bibinfo{person}{Christos Faloutsos}, {and}
  \bibinfo{person}{Lei Li}.} \bibinfo{year}{2012}\natexlab{}.
\newblock \showarticletitle{Rolx: structural role extraction \& mining in large
  graphs}. In \bibinfo{booktitle}{\emph{{Proceedings of the 18th ACM SIGKDD
  International Conference on Knowledge Discovery and Data Mining}}}. ACM,
  \bibinfo{pages}{1231--1239}.
\newblock


\bibitem[\protect\citeauthoryear{Homans}{Homans}{1950}]%
        {homans1950human}
\bibfield{author}{\bibinfo{person}{George~Caspar Homans}.}
  \bibinfo{year}{1950}\natexlab{}.
\newblock \bibinfo{booktitle}{\emph{The human group}}.
\newblock \bibinfo{publisher}{Harcourt, Brace \& World}.
\newblock


\bibitem[\protect\citeauthoryear{Huang, Tang, Wu, and Liu}{Huang
  et~al\mbox{.}}{2014}]%
        {huang2014mining}
\bibfield{author}{\bibinfo{person}{Hong Huang}, \bibinfo{person}{Jie Tang},
  \bibinfo{person}{Sen Wu}, {and} \bibinfo{person}{Lu Liu}.}
  \bibinfo{year}{2014}\natexlab{}.
\newblock \showarticletitle{Mining triadic closure patterns in social
  networks}. In \bibinfo{booktitle}{\emph{{Proceedings of the Twenty-Third
  International Conference on World Wide Web}}}. ACM,
  \bibinfo{pages}{499--504}.
\newblock


\bibitem[\protect\citeauthoryear{Jackson and Rogers}{Jackson and
  Rogers}{2007}]%
        {jackson2007meeting}
\bibfield{author}{\bibinfo{person}{Matthew~O Jackson} {and}
  \bibinfo{person}{Brian~W Rogers}.} \bibinfo{year}{2007}\natexlab{}.
\newblock \showarticletitle{Meeting strangers and friends of friends: How
  random are social networks?}
\newblock \bibinfo{journal}{\emph{{American Economic Review}}}
  \bibinfo{volume}{97}, \bibinfo{number}{3} (\bibinfo{year}{2007}),
  \bibinfo{pages}{890--915}.
\newblock


\bibitem[\protect\citeauthoryear{Kaiser}{Kaiser}{2008}]%
        {kaiser2008mean}
\bibfield{author}{\bibinfo{person}{Marcus Kaiser}.}
  \bibinfo{year}{2008}\natexlab{}.
\newblock \showarticletitle{Mean clustering coefficients: the role of isolated
  nodes and leafs on clustering measures for small-world networks}.
\newblock \bibinfo{journal}{\emph{{New Journal of Physics}}}
  \bibinfo{volume}{10}, \bibinfo{number}{8} (\bibinfo{year}{2008}),
  \bibinfo{pages}{083042}.
\newblock


\bibitem[\protect\citeauthoryear{LaFond, Neville, and Gallagher}{LaFond
  et~al\mbox{.}}{2014}]%
        {lafond2014anomaly}
\bibfield{author}{\bibinfo{person}{Timothy LaFond}, \bibinfo{person}{Jennifer
  Neville}, {and} \bibinfo{person}{Brian Gallagher}.}
  \bibinfo{year}{2014}\natexlab{}.
\newblock \showarticletitle{Anomaly detection in networks with changing
  trends}. In \bibinfo{booktitle}{\emph{{Outlier Detection and Description
  under Data Diversity at the International Conference on Knowledge Discovery
  and Data Mining}}}.
\newblock


\bibitem[\protect\citeauthoryear{Lazega}{Lazega}{2001}]%
        {lazega2001collegial}
\bibfield{author}{\bibinfo{person}{Emmanuel Lazega}.}
  \bibinfo{year}{2001}\natexlab{}.
\newblock \bibinfo{booktitle}{\emph{The collegial phenomenon: The social
  mechanisms of cooperation among peers in a corporate law partnership}}.
\newblock \bibinfo{publisher}{Oxford University Press on Demand}.
\newblock


\bibitem[\protect\citeauthoryear{Leskovec, Huttenlocher, and
  Kleinberg}{Leskovec et~al\mbox{.}}{2010}]%
        {leskovec2010signed}
\bibfield{author}{\bibinfo{person}{Jure Leskovec}, \bibinfo{person}{Daniel
  Huttenlocher}, {and} \bibinfo{person}{Jon Kleinberg}.}
  \bibinfo{year}{2010}\natexlab{}.
\newblock \showarticletitle{Signed networks in social media}. In
  \bibinfo{booktitle}{\emph{{Proceedings of the SIGCHI Conference on Human
  Factors in Computing Systems}}}. ACM, \bibinfo{pages}{1361--1370}.
\newblock


\bibitem[\protect\citeauthoryear{Leskovec, Kleinberg, and Faloutsos}{Leskovec
  et~al\mbox{.}}{2005}]%
        {leskovec2005graphs}
\bibfield{author}{\bibinfo{person}{Jure Leskovec}, \bibinfo{person}{Jon
  Kleinberg}, {and} \bibinfo{person}{Christos Faloutsos}.}
  \bibinfo{year}{2005}\natexlab{}.
\newblock \showarticletitle{Graphs over time: densification laws, shrinking
  diameters and possible explanations}. In
  \bibinfo{booktitle}{\emph{{Proceedings of the eleventh ACM SIGKDD
  International Conference on Knowledge Discovery in Data Mining}}}. ACM,
  \bibinfo{pages}{177--187}.
\newblock


\bibitem[\protect\citeauthoryear{Leskovec, Lang, Dasgupta, and
  Mahoney}{Leskovec et~al\mbox{.}}{2009}]%
        {leskovec2009community}
\bibfield{author}{\bibinfo{person}{Jure Leskovec}, \bibinfo{person}{Kevin~J
  Lang}, \bibinfo{person}{Anirban Dasgupta}, {and} \bibinfo{person}{Michael~W
  Mahoney}.} \bibinfo{year}{2009}\natexlab{}.
\newblock \showarticletitle{Community structure in large networks: Natural
  cluster sizes and the absence of large well-defined clusters}.
\newblock \bibinfo{journal}{\emph{{Internet Mathematics}}} \bibinfo{volume}{6},
  \bibinfo{number}{1} (\bibinfo{year}{2009}), \bibinfo{pages}{29--123}.
\newblock


\bibitem[\protect\citeauthoryear{Liao, Ding, Marinazzo, Xu, Wang, Yuan, Zhang,
  Lu, and Chen}{Liao et~al\mbox{.}}{2011}]%
        {liao2011small}
\bibfield{author}{\bibinfo{person}{Wei Liao}, \bibinfo{person}{Jurong Ding},
  \bibinfo{person}{Daniele Marinazzo}, \bibinfo{person}{Qiang Xu},
  \bibinfo{person}{Zhengge Wang}, \bibinfo{person}{Cuiping Yuan},
  \bibinfo{person}{Zhiqiang Zhang}, \bibinfo{person}{Guangming Lu}, {and}
  \bibinfo{person}{Huafu Chen}.} \bibinfo{year}{2011}\natexlab{}.
\newblock \showarticletitle{Small-world directed networks in the human brain:
  multivariate Granger causality analysis of resting-state fMRI}.
\newblock \bibinfo{journal}{\emph{Neuroimage}} \bibinfo{volume}{54},
  \bibinfo{number}{4} (\bibinfo{year}{2011}), \bibinfo{pages}{2683--2694}.
\newblock


\bibitem[\protect\citeauthoryear{Lou, Tang, Hopcroft, Fang, and Ding}{Lou
  et~al\mbox{.}}{2013}]%
        {lou2013learning}
\bibfield{author}{\bibinfo{person}{Tiancheng Lou}, \bibinfo{person}{Jie Tang},
  \bibinfo{person}{John Hopcroft}, \bibinfo{person}{Zhanpeng Fang}, {and}
  \bibinfo{person}{Xiaowen Ding}.} \bibinfo{year}{2013}\natexlab{}.
\newblock \showarticletitle{Learning to predict reciprocity and triadic closure
  in social networks}.
\newblock \bibinfo{journal}{\emph{{ACM Transactions on Knowledge Discovery from
  Data (TKDD)}}} \bibinfo{volume}{7}, \bibinfo{number}{2}
  (\bibinfo{year}{2013}), \bibinfo{pages}{5}.
\newblock


\bibitem[\protect\citeauthoryear{Mangan, Zaslaver, and Alon}{Mangan
  et~al\mbox{.}}{2003}]%
        {mangan2003coherent}
\bibfield{author}{\bibinfo{person}{Shmoolik Mangan}, \bibinfo{person}{Alon
  Zaslaver}, {and} \bibinfo{person}{Uri Alon}.}
  \bibinfo{year}{2003}\natexlab{}.
\newblock \showarticletitle{The coherent feedforward loop serves as a
  sign-sensitive delay element in transcription networks}.
\newblock \bibinfo{journal}{\emph{{Journal of Molecular Biology}}}
  \bibinfo{volume}{334}, \bibinfo{number}{2} (\bibinfo{year}{2003}),
  \bibinfo{pages}{197--204}.
\newblock


\bibitem[\protect\citeauthoryear{Milo, Shen-Orr, Itzkovitz, Kashtan,
  Chklovskii, and Alon}{Milo et~al\mbox{.}}{2002}]%
        {milo2002network}
\bibfield{author}{\bibinfo{person}{Ron Milo}, \bibinfo{person}{Shai Shen-Orr},
  \bibinfo{person}{Shalev Itzkovitz}, \bibinfo{person}{Nadav Kashtan},
  \bibinfo{person}{Dmitri Chklovskii}, {and} \bibinfo{person}{Uri Alon}.}
  \bibinfo{year}{2002}\natexlab{}.
\newblock \showarticletitle{Network motifs: simple building blocks of complex
  networks}.
\newblock \bibinfo{journal}{\emph{Science}} \bibinfo{volume}{298},
  \bibinfo{number}{5594} (\bibinfo{year}{2002}), \bibinfo{pages}{824--827}.
\newblock


\bibitem[\protect\citeauthoryear{Minoiu and Reyes}{Minoiu and Reyes}{2013}]%
        {minoiu2013network}
\bibfield{author}{\bibinfo{person}{Camelia Minoiu} {and}
  \bibinfo{person}{Javier~A Reyes}.} \bibinfo{year}{2013}\natexlab{}.
\newblock \showarticletitle{A network analysis of global banking: 1978--2010}.
\newblock \bibinfo{journal}{\emph{{Journal of Financial Stability}}}
  \bibinfo{volume}{9}, \bibinfo{number}{2} (\bibinfo{year}{2013}),
  \bibinfo{pages}{168--184}.
\newblock


\bibitem[\protect\citeauthoryear{Molloy and Reed}{Molloy and Reed}{1995}]%
        {molloy1995critical}
\bibfield{author}{\bibinfo{person}{Michael Molloy} {and} \bibinfo{person}{Bruce
  Reed}.} \bibinfo{year}{1995}\natexlab{}.
\newblock \showarticletitle{A critical point for random graphs with a given
  degree sequence}.
\newblock \bibinfo{journal}{\emph{{Random Structures \& Algorithms}}}
  \bibinfo{volume}{6}, \bibinfo{number}{2-3} (\bibinfo{year}{1995}),
  \bibinfo{pages}{161--180}.
\newblock


\bibitem[\protect\citeauthoryear{Newman}{Newman}{2003}]%
        {newman2003structure}
\bibfield{author}{\bibinfo{person}{Mark~EJ Newman}.}
  \bibinfo{year}{2003}\natexlab{}.
\newblock \showarticletitle{The structure and function of complex networks}.
\newblock \bibinfo{journal}{\emph{{SIAM Review}}} \bibinfo{volume}{45},
  \bibinfo{number}{2} (\bibinfo{year}{2003}), \bibinfo{pages}{167--256}.
\newblock


\bibitem[\protect\citeauthoryear{Newman, Forrest, and Balthrop}{Newman
  et~al\mbox{.}}{2002}]%
        {newman2002email}
\bibfield{author}{\bibinfo{person}{Mark~EJ Newman}, \bibinfo{person}{Stephanie
  Forrest}, {and} \bibinfo{person}{Justin Balthrop}.}
  \bibinfo{year}{2002}\natexlab{}.
\newblock \showarticletitle{Email networks and the spread of computer viruses}.
\newblock \bibinfo{journal}{\emph{{Physical Review E}}} \bibinfo{volume}{66},
  \bibinfo{number}{3} (\bibinfo{year}{2002}), \bibinfo{pages}{035101}.
\newblock


\bibitem[\protect\citeauthoryear{Newman, Strogatz, and Watts}{Newman
  et~al\mbox{.}}{2001}]%
        {newman2001random}
\bibfield{author}{\bibinfo{person}{Mark~EJ Newman}, \bibinfo{person}{Steven~H
  Strogatz}, {and} \bibinfo{person}{Duncan~J Watts}.}
  \bibinfo{year}{2001}\natexlab{}.
\newblock \showarticletitle{Random graphs with arbitrary degree distributions
  and their applications}.
\newblock \bibinfo{journal}{\emph{{Physical Review E}}} \bibinfo{volume}{64},
  \bibinfo{number}{2} (\bibinfo{year}{2001}), \bibinfo{pages}{026118}.
\newblock


\bibitem[\protect\citeauthoryear{Onnela, Saram{\"a}ki, Kert{\'e}sz, and
  Kaski}{Onnela et~al\mbox{.}}{2005}]%
        {onnela2005intensity}
\bibfield{author}{\bibinfo{person}{Jukka-Pekka Onnela}, \bibinfo{person}{Jari
  Saram{\"a}ki}, \bibinfo{person}{J{\'a}nos Kert{\'e}sz}, {and}
  \bibinfo{person}{Kimmo Kaski}.} \bibinfo{year}{2005}\natexlab{}.
\newblock \showarticletitle{Intensity and coherence of motifs in weighted
  complex networks}.
\newblock \bibinfo{journal}{\emph{{Physical Review E}}} \bibinfo{volume}{71},
  \bibinfo{number}{6} (\bibinfo{year}{2005}), \bibinfo{pages}{065103}.
\newblock


\bibitem[\protect\citeauthoryear{Panzarasa, Opsahl, and Carley}{Panzarasa
  et~al\mbox{.}}{2009}]%
        {panzarasa2009patterns}
\bibfield{author}{\bibinfo{person}{Pietro Panzarasa}, \bibinfo{person}{Tore
  Opsahl}, {and} \bibinfo{person}{Kathleen~M Carley}.}
  \bibinfo{year}{2009}\natexlab{}.
\newblock \showarticletitle{Patterns and dynamics of users' behavior and
  interaction: Network analysis of an online community}.
\newblock \bibinfo{journal}{\emph{{Journal of the American Society for
  Information Science and Technology}}} \bibinfo{volume}{60},
  \bibinfo{number}{5} (\bibinfo{year}{2009}), \bibinfo{pages}{911--932}.
\newblock


\bibitem[\protect\citeauthoryear{Rao, Jana, and Bandyopadhyay}{Rao
  et~al\mbox{.}}{1996}]%
        {rao1996markov}
\bibfield{author}{\bibinfo{person}{A~Ramachandra Rao},
  \bibinfo{person}{Rabindranath Jana}, {and} \bibinfo{person}{Suraj
  Bandyopadhyay}.} \bibinfo{year}{1996}\natexlab{}.
\newblock \showarticletitle{A Markov chain Monte Carlo method for generating
  random (0, 1)-matrices with given marginals}.
\newblock \bibinfo{journal}{\emph{{Sankhy{\=a}: The Indian Journal of
  Statistics, Series A}}} (\bibinfo{year}{1996}), \bibinfo{pages}{225--242}.
\newblock


\bibitem[\protect\citeauthoryear{Rapoport}{Rapoport}{1953}]%
        {rapoport1953spread}
\bibfield{author}{\bibinfo{person}{Anatol Rapoport}.}
  \bibinfo{year}{1953}\natexlab{}.
\newblock \showarticletitle{Spread of information through a population with
  socio-structural bias: I. Assumption of transitivity}.
\newblock \bibinfo{journal}{\emph{{The Bulletin of Mathematical Biophysics}}}
  \bibinfo{volume}{15}, \bibinfo{number}{4} (\bibinfo{year}{1953}),
  \bibinfo{pages}{523--533}.
\newblock


\bibitem[\protect\citeauthoryear{Richardson, Agrawal, and Domingos}{Richardson
  et~al\mbox{.}}{2003}]%
        {richardson2003trust}
\bibfield{author}{\bibinfo{person}{Matthew Richardson}, \bibinfo{person}{Rakesh
  Agrawal}, {and} \bibinfo{person}{Pedro Domingos}.}
  \bibinfo{year}{2003}\natexlab{}.
\newblock \showarticletitle{Trust management for the semantic web}. In
  \bibinfo{booktitle}{\emph{{International Semantic Web Conference}}}.
  Springer, \bibinfo{pages}{351--368}.
\newblock


\bibitem[\protect\citeauthoryear{Robles, Moreno, and Neville}{Robles
  et~al\mbox{.}}{2016}]%
        {robles2016sampling}
\bibfield{author}{\bibinfo{person}{Pablo Robles}, \bibinfo{person}{Sebastian
  Moreno}, {and} \bibinfo{person}{Jennifer Neville}.}
  \bibinfo{year}{2016}\natexlab{}.
\newblock \showarticletitle{Sampling of attributed networks from hierarchical
  generative models}. In \bibinfo{booktitle}{\emph{{Proceedings of the 22nd ACM
  SIGKDD International Conference on Knowledge Discovery and Data Mining}}}.
  ACM, \bibinfo{pages}{1155--1164}.
\newblock


\bibitem[\protect\citeauthoryear{Romero and Kleinberg}{Romero and
  Kleinberg}{2010}]%
        {romero2010directed}
\bibfield{author}{\bibinfo{person}{Daniel~M Romero} {and} \bibinfo{person}{Jon
  Kleinberg}.} \bibinfo{year}{2010}\natexlab{}.
\newblock \showarticletitle{The directed closure process in hybrid
  social-information networks, with an analysis of link formation on
  {Twitter}}. In \bibinfo{booktitle}{\emph{{Fourth International AAAI
  Conference on Weblogs and Social Media}}}.
\newblock


\bibitem[\protect\citeauthoryear{Sarajli{\'c}, Malod-Dognin, Yavero{\u{g}}lu,
  and Pr{\v{z}}ulj}{Sarajli{\'c} et~al\mbox{.}}{2016}]%
        {sarajlic2016graphlet}
\bibfield{author}{\bibinfo{person}{Anida Sarajli{\'c}},
  \bibinfo{person}{No{\"e}l Malod-Dognin}, \bibinfo{person}{{\"O}mer~Nebil
  Yavero{\u{g}}lu}, {and} \bibinfo{person}{Nata{\v{s}}a Pr{\v{z}}ulj}.}
  \bibinfo{year}{2016}\natexlab{}.
\newblock \showarticletitle{Graphlet-based characterization of directed
  networks}.
\newblock \bibinfo{journal}{\emph{Scientific reports}}  \bibinfo{volume}{6}
  (\bibinfo{year}{2016}), \bibinfo{pages}{35098}.
\newblock


\bibitem[\protect\citeauthoryear{Seshadhri, Kolda, and Pinar}{Seshadhri
  et~al\mbox{.}}{2012}]%
        {seshadhri2012community}
\bibfield{author}{\bibinfo{person}{Comandur Seshadhri},
  \bibinfo{person}{Tamara~G Kolda}, {and} \bibinfo{person}{Ali Pinar}.}
  \bibinfo{year}{2012}\natexlab{}.
\newblock \showarticletitle{Community structure and scale-free collections of
  Erd{\H{o}}s-R{\'e}nyi graphs}.
\newblock \bibinfo{journal}{\emph{{Physical Review E}}} \bibinfo{volume}{85},
  \bibinfo{number}{5} (\bibinfo{year}{2012}), \bibinfo{pages}{056109}.
\newblock


\bibitem[\protect\citeauthoryear{Seshadhri, Pinar, Durak, and Kolda}{Seshadhri
  et~al\mbox{.}}{2016}]%
        {seshadhri2016directed}
\bibfield{author}{\bibinfo{person}{C Seshadhri}, \bibinfo{person}{Ali Pinar},
  \bibinfo{person}{Nurcan Durak}, {and} \bibinfo{person}{Tamara~G Kolda}.}
  \bibinfo{year}{2016}\natexlab{}.
\newblock \showarticletitle{Directed closure measures for networks with
  reciprocity}.
\newblock \bibinfo{journal}{\emph{{Journal of Complex Networks}}}
  \bibinfo{volume}{5}, \bibinfo{number}{1} (\bibinfo{year}{2016}),
  \bibinfo{pages}{32--47}.
\newblock


\bibitem[\protect\citeauthoryear{Simmel}{Simmel}{1908}]%
        {simmel1908soziologie}
\bibfield{author}{\bibinfo{person}{Georg Simmel}.}
  \bibinfo{year}{1908}\natexlab{}.
\newblock \bibinfo{booktitle}{\emph{Soziologie: Untersuchungen {\"U}ber Die
  Formen Der Vergesellschaftung}}.
\newblock \bibinfo{publisher}{Duncker \& Humblot}.
\newblock


\bibitem[\protect\citeauthoryear{Stegehuis}{Stegehuis}{2019}]%
        {stegehuis2019closure}
\bibfield{author}{\bibinfo{person}{Clara Stegehuis}.}
  \bibinfo{year}{2019}\natexlab{}.
\newblock \showarticletitle{Closure coefficients in scale-free complex
  networks}.
\newblock \bibinfo{journal}{\emph{arXiv preprint arXiv:1911.11410}}
  (\bibinfo{year}{2019}).
\newblock


\bibitem[\protect\citeauthoryear{Ulanowicz and DeAngelis}{Ulanowicz and
  DeAngelis}{2005}]%
        {ulanowicz2005network}
\bibfield{author}{\bibinfo{person}{Robert~E Ulanowicz} {and}
  \bibinfo{person}{Donald~L DeAngelis}.} \bibinfo{year}{2005}\natexlab{}.
\newblock \showarticletitle{Network analysis of trophic dynamics in south
  florida ecosystems}.
\newblock \bibinfo{journal}{\emph{{US Geological Survey Program on the South
  Florida Ecosystem}}}  \bibinfo{volume}{114} (\bibinfo{year}{2005}),
  \bibinfo{pages}{45}.
\newblock


\bibitem[\protect\citeauthoryear{Wasserman and Faust}{Wasserman and
  Faust}{1994}]%
        {wasserman1994social}
\bibfield{author}{\bibinfo{person}{Stanley Wasserman} {and}
  \bibinfo{person}{Katherine Faust}.} \bibinfo{year}{1994}\natexlab{}.
\newblock \bibinfo{booktitle}{\emph{Social network analysis: Methods and
  applications}}. Vol.~\bibinfo{volume}{8}.
\newblock \bibinfo{publisher}{Cambridge University Press}.
\newblock


\bibitem[\protect\citeauthoryear{Watts and Strogatz}{Watts and
  Strogatz}{1998}]%
        {watts1998collective}
\bibfield{author}{\bibinfo{person}{Duncan~J Watts} {and}
  \bibinfo{person}{Steven~H Strogatz}.} \bibinfo{year}{1998}\natexlab{}.
\newblock \showarticletitle{Collective dynamics of `small-world' networks}.
\newblock \bibinfo{journal}{\emph{{Nature}}} \bibinfo{volume}{393},
  \bibinfo{number}{6684} (\bibinfo{year}{1998}), \bibinfo{pages}{440}.
\newblock


\bibitem[\protect\citeauthoryear{Yin, Benson, and Leskovec}{Yin
  et~al\mbox{.}}{2019}]%
        {yin2019local}
\bibfield{author}{\bibinfo{person}{Hao Yin}, \bibinfo{person}{Austin~R Benson},
  {and} \bibinfo{person}{Jure Leskovec}.} \bibinfo{year}{2019}\natexlab{}.
\newblock \showarticletitle{The Local Closure Coefficient: A New Perspective On
  Network Clustering}. In \bibinfo{booktitle}{\emph{{Proceedings of the Twelfth
  ACM International Conference on Web Search and Data Mining}}}. ACM,
  \bibinfo{pages}{303--311}.
\newblock


\bibitem[\protect\citeauthoryear{Yin, Benson, Leskovec, and Gleich}{Yin
  et~al\mbox{.}}{2017}]%
        {yin2017local}
\bibfield{author}{\bibinfo{person}{Hao Yin}, \bibinfo{person}{Austin~R Benson},
  \bibinfo{person}{Jure Leskovec}, {and} \bibinfo{person}{David~F Gleich}.}
  \bibinfo{year}{2017}\natexlab{}.
\newblock \showarticletitle{Local higher-order graph clustering}. In
  \bibinfo{booktitle}{\emph{{Proceedings of the 23rd ACM SIGKDD International
  Conference on Knowledge Discovery and Data Mining}}}. ACM,
  \bibinfo{pages}{555--564}.
\newblock


\end{thebibliography}


\end{document}